\documentclass[final,12pt]{colt2026} 

\title{Wasserstein Policy Learning for Distributional Outcomes}
\usepackage{times}
\usepackage{mathrsfs}
\usepackage{mathtools}
\usepackage{xcolor}
\newtheorem{assumption}[theorem]{Assumption}

\renewcommand{\thefootnote}{\fnsymbol{footnote}}
\coltauthor{ \Name{Yiyan Huang\protect\footnotemark[1]\protect\footnotemark[2]} \Email{huangyiyan@gbu.edu.cn}\\ \addr School of Computing and Information Technology, Great Bay University, Guangdong, China \AND \Name{Cheuk Hang Leung\protect\footnotemark[1]} \Email{chleung87@cityu.edu.hk}\\ \addr Department of Data Science, City University of Hong Kong, Hong Kong, China \AND \Name{Qi Wu\protect\footnotemark[1]} \Email{ qiwu55@cityu.edu.hk}\\ \addr Department of Data Science, City University of Hong Kong, Hong Kong, China \AND \Name{Zhiheng Zhang\protect\footnotemark[1]} \Email{zhangzhiheng@mail.shufe.edu.cn}\\ \addr School of Statistics and Data Science \& Institute of Big Data Research, Shanghai University of Finance and Economics, Shanghai, China }

\begin{document}
	
	\maketitle
	\footnotetext[1]{Authors are in alphabetical order.} \footnotetext[2]{Corresponding author.} 
	\begin{abstract}%
		Offline policy learning has received growing attention in causal inference. The primary objective is to learn a policy (individualized treatment rule) as a mapping from covariates to treatment that maximizes the empirical welfare defined as the mean of scalar-valued potential outcomes. In this paper, we study offline policy learning with distribution-valued outcomes, where each potential outcome is a probability measure on $\mathbb{R}$ and the reward is defined through a utility functional applied to the Wasserstein barycenter of induced outcome distributions.
		We establish statistical guarantees for the policy learning framework based on both Inverse Probability Weighting (IPW) and Doubly Robust (DR) estimators. By handling the challenging uniform deviation over the product of the combinatorial policy class and the infinite-dimensional quantile domain, we prove that the finite-sample regret has leading dependence $\widetilde{\mathcal{O}}(\sqrt{\mathrm{N\text{-}dim}(\Pi)/N})$. In the one-dimensional Wasserstein setting and under the stated regularity conditions, the leading regret rate is still governed by the policy-class complexity. Moreover, we provide a minimax lower bound establishing the sharpness of the leading dependence on $N$ and $\mathrm{N\text{-}dim}(\Pi)$.
	\end{abstract}

	\begin{keywords}%
		Causal inference, policy learning, distributional outcome
	\end{keywords}

	\section{Introduction}
	
	Offline policy learning, aiming to derive individualized treatment rules from observational data to maximize population-level welfare, is an important approach for personalized decision-making in causal inference \citep{zhao2012estimating, swaminathan2015batch, zhou2017residual, kitagawa2018should, kallus2018balanced, kallus2021minimax, athey2021policy}. In the classical regime, a policy $\pi: \mathcal{X} \to \mathcal{A}$ is evaluated based on the expectation of a scalar potential outcome $Y \in \mathbb{R}$ (e.g., the Average Treatment Effect). This paradigm has been extensively studied, with empirical welfare maximization (EWM) approaches achieving minimax optimal rates by leveraging plug-in estimators such as Inverse Probability Weighting (IPW) \citep{kitagawa2018should} or Doubly Robust (DR) \citep{athey2021policy} methods to construct unbiased surrogates of the policy value.
	
	However, reducing welfare to a scalar expectation fails to capture distributional nuances such as risk, inequality, or tail behavior. This deficiency has motivated the study of \textit{distributional policy learning}, where the objective targets functionals of the outcome distribution (e.g., quantiles or CVaR) \citep{wang2018quantile, lin2023causal, cui2025policy}. Crucially, while these methods optimize distributional criteria, the underlying potential outcome $Y$ remains a scalar random variable. The complexity arises solely from the non-linearity of the objective function, not from the structure of the outcome space itself.
	
	In contrast, many modern applications feature outcomes that are inherently stochastic processes or probability measures. For instance, policymakers may aim to optimize the entire wealth distribution shape to mitigate inequality \citep{RePEc:fip:fedcwq:191800}, or healthcare systems may target physiological dynamics modeled as measures on path space \citep{zhou2025dynamic}. In these settings, the outcome $\mathcal{Y}$ is naturally an element of the Wasserstein space of probability measures, $\mathcal P_2(\mathbb{R})$ equipped with $\mathcal W_2$. Treating such measure-valued outcomes as densities in a linear space (e.g., $L_2$) and applying standard functional averages can destroy the intrinsic geometry of the data. As noted in \cite{panaretos2019statistical}, linear averaging fails to account for mass displacement, often yielding ``barycenters" that do not represent any individual realization in the population (e.g., the linear average of two unimodal distributions may be bimodal).
	
	This necessitates a shift to \textit{policy learning with distributional outcomes}. Our goal is to learn a policy that maximizes a utility defined on the \textit{Wasserstein barycenter}, which is the Fréchet mean of the induced outcome distributions. This introduces a fundamental theoretical hurdle: the Wasserstein barycenter is defined as the solution to an optimization problem over a metric space, lacking the closed-form linearity of expectations \citep{kurisu2024geodesic}. Consequently, the learning problem for distributional outcome cannot be trivially reduced to standard scalar-outcome empirical welfare maximization like \cite{kitagawa2018should, athey2021policy}. The policy in this case induces an entire quantile curve in $L_2([0,1])$, and our objective relies on the Wasserstein barycenter, which preserves the geometry of optimal transport and differs fundamentally from the linear averaging of densities or cumulative distribution functions. Statistically, addressing these issues requires establishing uniform convergence guarantees for policy-dependent objects in an infinite-dimensional space, which necessitates controlling the complex interplay between the combinatorial complexity of the policy class and the metric entropy of the outcome space.
	
	To see why this poses a technical barrier, consider a concrete scenario in precision medicine: using continuous glucose monitoring data to learn an insulin dosing policy. The outcome is not a single value, but a distributional profile of glucose levels over time (a probability measure). This setting involves the following challenges:
	\begin{itemize}
		\item The functional complexity: The glucose profile is an infinite-dimensional object with complex shape features---it may have multiple peaks (post-prandial spikes) or heavy tails (hypoglycemia risks). Capturing this full distributional shape requires covering the massive functional space of potential quantile curves, indexed by $t \in [0,1]$.
		\item The policy complexity: Simultaneously, the policy searches through a vast combinatorial space of decision rules (e.g., deep decision trees based on genomic data) to find the optimal subgroup assignment.
		\item The complex interplay: The true technical barrier arises from the exploitation of functional flexibility by the combinatorial search. A policy learner, in its pursuit of empirical welfare, may ``cheat'' by overfitting to statistical noise at specific quantile levels—for instance, artificially overfitting the median quantile level ($t=0.5$) to maximize utility while unknowingly destabilizing the tails ($t \to 0$ or $1$), potentially leading to clinically undesirable tail behavior.
	\end{itemize}
	This example demonstrates that while policy learning with distributional outcomes is essential, it presents unique theoretical hurdles.
Unlike scalar policy learning, where the risk of misestimation is limited to a single expected value, distribution-valued outcomes introduce policy-indexed objects over an infinite-dimensional quantile domain. The main statistical question is whether replacing scalar outcomes by distribution-valued outcomes introduces an additional nonparametric price in policy learning. Our analysis shows that, in the one-dimensional Wasserstein setting, after the quantile-isometry reduction, the leading regret remains governed by the policy-class complexity. Technically, this requires a product-index uniform deviation bound over the policy class $\Pi$ and the quantile index $t\in[0,1]$.
	
In this paper, we address these challenges by developing a rigorous framework for offline policy learning with distribution-valued outcomes. We formulate the problem by combining a policy-induced $W_2$-barycenter target with a Wasserstein-Lipschitz utility functional. Our key methodological insight leverages the quantile isometry between $(\mathcal P_2(\mathbb{R}),\mathcal W_2)$ and the $L_2$ space of quantile functions. This transformation maps the non-linear barycenter problem into a tractable estimation of policy-indexed quantile curves without sacrificing geometric fidelity. To solve this, we construct Inverse Propensity Weighting (IPW) and cross-fitted Doubly Robust (DR) estimators. Theoretically, the central question is to identify the statistical price of moving from scalar outcomes to distribution-valued outcomes. 
	\paragraph{Organization.} 	Our analysis proceeds in four steps. 
	First, we define a population-level policy value through the Wasserstein barycenter of policy-induced distribution-valued outcomes. 
	Second, in one dimension, we use the quantile isometry to convert barycenter learning into estimation of a policy-indexed mean quantile curve. 
	Third, we construct IPW and cross-fitted DR estimators and enforce validity by monotone rearrangement. 
	Fourth, we prove finite-sample regret upper bounds and a minimax lower bound matching the leading dependence on the sample size and policy-class complexity.

	\paragraph{Contributions.} Our contributions are summarized as follows:
	\begin{itemize}
		\item We formalize the offline policy learning problem where outcomes are probability measures in $\mathcal P_2(\mathbb R)$. By defining the policy value via the Wasserstein barycenter, we generalize standard welfare maximization to respect the intrinsic geometry of distributional data, distinct from functional definitions in Hilbert spaces.
		\item Leveraging the 2-Wasserstein quantile isometry, we reduce barycenter estimation to learning policy-indexed quantile curves. We propose IPW and cross-fitted DR estimators that operate directly on the space of quantile functions. Crucially, we incorporate a monotone rearrangement step that enforces the validity of the estimated quantiles without worsening the $L_2$ estimation error.
		\item We establish finite-sample regret bounds that identify whether distribution-valued outcomes create an additional leading-order price for policy learning. By controlling the product-index uniform deviation over $\Pi\times[0,1]$, we prove a $\widetilde{\mathcal{O}}(\sqrt{\mathrm{N\text{-}dim}(\Pi)/N})$ regret rate. Thus, in the one-dimensional Wasserstein setting, the leading statistical complexity is still governed by the policy class.
		\item We derive minimax lower bounds showing that the leading dependence on sample size $N$ and policy complexity $\mathrm{N\text{-}dim}(\Pi)$ is rate-sharp. Hence the leading rate is sharp, and the quantile-isometry reduction does not hide an additional leading-order nonparametric penalty.
	\end{itemize}
	
	\section{Related Works}
	
	\paragraph{Policy learning in causal inference.}
	A central theoretical goal for offline policy learning is the establishment of minimax regret bounds that scale with the complexity of the policy class. Empirical Welfare Maximization (EWM), relying on IPW or DR estimators, has been the dominant paradigm for scalar outcomes \citep{manski2004statistical, kitagawa2018should, athey2021policy}. For instance, \cite{kitagawa2018should} derived $\mathcal{O}(N^{-1/2})$ regret bounds dependent on the Vapnik-Chervonenkis (VC) dimension of the policy class, while \cite{athey2021policy} extended these guarantees to observational settings using cross-fitting and orthogonal scores. This literature has expanded to cover continuous treatments \citep{kallus2018policy, chernozhukov2019semi, ai2026data}, policy learning without overlap \citep{khan2023off, zhao2024positivity, jin2025policy}, unmeasured confounding \citep{kallus2018confounding, kallus2021minimax}, distribution shift \citep{mo2021learning, kallus2022doubly, mu2022factored, adjaho2022externally, kido2022distributionally, si2023distributionally, shen2024wasserstein}, and adaptive settings \citep{bibaut2021risk, zhan2024policy}. However, the theoretical machinery in these works critically relies on the outcome residing in a low-dimensional Euclidean space, leaving the regime of distribution-valued outcomes unexplored.
	
	\paragraph{Policy learning with distributional objectives.}
	A growing body of work has moved beyond average welfare to optimize distributional functionals, such as quantiles, CVaR, or Gini coefficients \citep{wang2018quantile, kock2024regularizing, cui2025policy, manski2023statistical}. While these methods capture risk and inequality, they represent distributional objectives on scalar outcomes, not distributional outcomes. In these settings, the potential outcome $Y$ remains a scalar random variable, and the challenge lies in the non-linearity of the utility function. Consequently, these approaches typically rely on sorting or rank-based statistics of scalar variables. In contrast, our work addresses outcomes that are themselves probability measures in the 2-Wasserstein space $(\mathcal P_2(\mathbb R),\mathcal W_2)$. This shifts the problem from scalar ranking to metric space optimization. Unlike \cite{cui2025policy} which targets functionals of a scalar $Y$, we integrate Wasserstein barycenters into offline policy learning, necessitating tools that respect the mass-displacement geometry of the outcome itself, and more importantly, derive finite-sample regret bounds for empirical utility maximization, together with minimax lower bounds matching the leading dependence on the sample size and policy-class complexity.
	
	\paragraph{Causal inference with non-scalar outcomes.} Analysis of complex outcomes broadly falls into two paradigms: Hilbert-space approaches and metric-space approaches. The former, including Functional ATE (FATE) methods \citep{ecker2024causal, testa2025doubly}, treat outcomes as elements of $L_2$ and apply linear averaging. While computationally tractable, linear averaging is inadequate for distributions with phase variation, often yielding unrepresentative barycenters \citep{kurisu2024geodesic}. The latter paradigm, Geodesic ATE (GATE), addresses this by estimating treatment effects via Fréchet means in metric spaces like Wasserstein space \citep{lin2023causal, kurisu2024geodesic, bhattacharjee2025doubly, raykov2025kernel}. Crucially, our work differs from the GATE literature in its goal and theoretical scope. Existing GATE works focus on estimation and inference to establish the consistency or asymptotic normality of a treatment effect estimator. In contrast, we address the policy learning problem, which requires selecting the optimal policy from a policy class $\Pi$. This transition shifts the theoretical challenge to deriving a regret bound with uniform convergence rates over the combinatorial complexity of $\Pi$ (e.g., Natarajan dimension). To our knowledge, we provide the first finite-sample regret bounds for decision-making with distributional outcomes in Wasserstein space.
	
	\section{Policy Learning with Distributional Outcome}\label{sec:setup}
	
	\subsection{Notations and Assumptions}
	Let $\mathcal A=\{a_1,\dots,a_d\}$ be a finite action set and $\mathcal X\subseteq\mathbb R^{K}$ be a compact context space. The outcome is \emph{distribution-valued}: $\mathcal Y$ is a random probability measure supported on a compact interval $\mathcal I=[\mathcal I_{\mathrm{low}},\mathcal I_{\mathrm{up}}]\subset\mathbb R$, with finite second moment. We write $\mathcal Y\in\mathcal P_2(\mathbb R)$ and assume $\mathrm{supp}(\mathcal Y)\subseteq \mathcal I$ almost surely. Equivalently, $\mathcal Y$ is represented by its (left-continuous) quantile function $\mathcal Y^{-1}:[0,1]\to\mathcal I$. We observe i.i.d.\ logged data $\{(X_i,A_i,\mathcal Y_i)\}_{i=1}^N$, where $A_i$ is drawn from a behavior policy with propensity $f_0(a|x):=\mathbb P(A=a\mid X=x)$. We adopt the standard potential outcome notation: for each $a\in\mathcal A$ there exists a potential distribution $\mathcal Y[a]\in\mathcal P_2(\mathbb R)$, with quantile function $\mathcal Y[a]^{-1}$, and the observed outcome satisfies $\mathcal Y=\mathcal Y[A]$ almost surely. We state necessary assumptions as follows.
	
	\paragraph{Assumptions.}
	We impose standard causal identification assumptions, adapted to the policy learning framework with distribution-valued outcomes.
	
	\begin{assumption}[Consistency]\label{ass:consistency}
		If $A=a$, then $\mathcal Y=\mathcal Y[a]$ almost surely.
	\end{assumption}
	
	\begin{assumption}[Unconfoundedness]\label{ass:unconf}
		For every $a\in\mathcal A$, $\mathcal Y[a]\perp\!\!\!\perp A \mid X$.
	\end{assumption}
	
	\begin{assumption}[Overlap]\label{ass:overlap}
		There exists $\underline f>0$ such that for all $x\in\mathcal X$ and $a\in\mathcal A$,
		$f_0(a|x)\ge \underline f$.
	\end{assumption}
	
	\begin{assumption}[Boundedness]\label{ass:boundedness}
		There exist continuous functions $\underline q,\overline q:[0,1]\to\mathbb R$ such that, for all $a\in\mathcal A$ and $t\in[0,1]$, almost surely,
		\[
		\underline q(t)\le \mathcal Y[a]^{-1}(t)\le \overline q(t).
		\]
		By consistency, the same bounds hold for the observed quantile $\mathcal Y^{-1}$.
		We denote a uniform bound by $M:=\sup_{t\in[0,1]} \max\{|\underline q(t)|,|\overline q(t)|\}<\infty$.
	\end{assumption}
	
	\begin{assumption}[Quantile regularity in $t$]\label{ass:quantile_lipschitz}
		There exists $\mathcal U<\infty$ such that for all $a\in\mathcal A$ and all $s,t\in[0,1]$,
		\[
		\big|\mathcal Y[a]^{-1}(t)-\mathcal Y[a]^{-1}(s)\big|\le \mathcal U |t-s|
		\qquad \text{almost surely.}
		\]
		By consistency, the same holds for the observed quantile $\mathcal Y^{-1}$.
	\end{assumption}
	
	Assumptions~\ref{ass:consistency}--\ref{ass:overlap} are standard for off-policy evaluation and policy learning.
	Assumptions~\ref{ass:boundedness}--\ref{ass:quantile_lipschitz} control the range and the $t$-regularity of the quantile curves.
	They enable a discretization argument that yields uniform control over the continuum index $t\in[0,1]$ in our theoretical analysis.

	\subsection{The Wasserstein Barycenter Objective}
	Let $\mathcal P_2(\mathbb R)$ denote the space of probability measures on $\mathbb R$ with finite second moments.
	Let $\Pi$ be a class of deterministic policies $\pi:\mathcal X\to\mathcal A$.
	For a given policy $\pi$, the counterfactual distributional outcome is a random measure $\mathcal Y[\pi(X)]$ taking values in $\mathcal P_2(\mathbb R)$.
	We define the policy-induced target as the $W_2$-barycenter of these counterfactual measures:
	\begin{equation}\label{eq:barycenter_def_main_text}
		\mu(\pi)
		\in
		\underset{\mu\in\mathcal P_2(\mathbb R)}{\arg\min} \;
		\mathbb E\Big[\mathcal{W}_2^2\big(\mu,\mathcal Y[\pi(X)]\big)\Big].
	\end{equation}
	This target is the Fr\'echet mean in $(\mathcal P_2(\mathbb R),\mathcal{W}_2)$: it summarizes the policy’s effect
	in a way that respects the geometry of distributions under optimal transport.
	
	The barycenter $\mu(\pi)$ provides a geometry-aware notion of the ``average'' distribution induced by $\pi$.
	To encode preferences over distributions, we optimize a distributional utility $U:\mathcal P_2(\mathbb R)\to\mathbb R$ as welfare for distribution-valued outcome and define
	\[
	\pi^\star\in\underset{\pi\in\Pi}{\arg\max}\;U\big(\mu(\pi)\big).
	\]
	This aggregate criterion should be distinguished from the average individually scalarized objective $\mathbb E[U(\mathcal Y[\pi(X)])]$. The latter first applies $U$ to each unit-level counterfactual distribution and then averages the resulting scalars, whereas $U(\mu(\pi))$ first aggregates the policy-induced distribution-valued outcomes through their Wasserstein barycenter and then evaluates the resulting population-level distributional profile. These two criteria generally differ when $U$ is nonlinear. For example, if half of the population has the degenerate distribution $\delta_0$ and the other half has $\delta_2$, their one-dimensional $W_2$-barycenter is $\delta_1$. For the bounded-support Wasserstein-Lipschitz utility $U(\nu)=-(\int z\,d\nu(z)-1)^2$, one has $U(\delta_1)=0$ but $\frac12U(\delta_0)+\frac12U(\delta_2)=-1$. Thus our objective evaluates the aggregate distributional profile induced by a policy.
	Therefore, we can measure the policy performance by the regret
	\[
	\mathcal R(\pi):=U(\mu(\pi^\star))-U(\mu(\pi)).
	\]
	
	In one dimension, Wasserstein geometry admits a convenient representation through quantile functions. For a measure $\nu\in\mathcal P_2(\mathbb R)$, let $\nu^{-1}:[0,1]\to\mathbb R$ denote its (left-continuous) quantile function. The following proposition highlights that the $\mathcal{W}_2$ metric is isometric to the $L_2([0,1])$ distance between quantile functions.
	\begin{proposition}[Quantile representation of $\mathcal{W}_{p}$ \citep{santambrogio2015optimal}]\label{prop:Quantile representation of Wasserstein} We define $\mu_{1},\mu_{2}$ as two probability measures on $\mathbb R$ with quantile functions
		$\mu_1^{-1},\mu_2^{-1}$. Then for $p\ge 1$,
		\[
		\mathcal{W}_{p}(\mu_{1},\mu_{2})
		=
		\left(\int_{0}^{1}\left|\mu_{1}^{-1}(t)-\mu_{2}^{-1}(t)\right|^{p}\,dt\right)^{\frac{1}{p}}.
		\]
	\end{proposition}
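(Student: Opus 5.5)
The plan is to realize an explicit coupling through the quantile functions, show that it attains the claimed value, and then prove that no other coupling does better. Let $U\sim\mathrm{Unif}(0,1)$ and set $Z_1=\mu_1^{-1}(U)$ and $Z_2=\mu_2^{-1}(U)$. The first step is the standard fact that $Z_i\sim\mu_i$. This follows from the Galois inequality $\mu_i^{-1}(u)\le z\iff u\le F_i(z)$, which holds for the left-continuous generalized inverse $\mu_i^{-1}(u)=\inf\{z:F_i(z)\ge u\}$. The law of $(Z_1,Z_2)$ is therefore an admissible coupling in $\Gamma(\mu_1,\mu_2)$, and its transport cost is exactly $\int_0^1|\mu_1^{-1}(t)-\mu_2^{-1}(t)|^p\,dt$. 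This gives the inequality ``$\le$''. If either side is infinite, the argument only needs to be read in $[0,\infty]$; for $\mu_i\in\mathcal P_p$ the quantile functions lie in $L_p(0,1)$, so everything is finite.

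The real content is the reverse inequality: the comonotone coupling $\gamma^\star=(\mu_1^{-1},\mu_2^{-1})_\#\mathrm{Leb}$ is optimal for the cost $c(x,y)=|x-y|^p$ with $p\ge1$. I would first treat $p>1$, where $h(z)=|z|^p$ is strictly convex. The route is cyclical monotonicity. Since $h$ is convex, the cost $c(x,y)=h(x-y)$ is submodular: for $x<x'$ and $y<y'$,
\[
c(x,y)+c(x',y')\le c(x,y')+c(x',y).
\]
This is proved by writing $a=x-y'$, $b=x'-y$ and noting that $x-y$ and $x'-y'$ are convex combinations of $a$ and $b$ with matching weights.

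Take any optimal plan $\gamma$; one exists by tightness of $\Gamma(\mu_1,\mu_2)$ and lower semicontinuity of $\gamma\mapsto\int c\,d\gamma$. Its support is $c$-cyclically monotone. Combined with the submodularity above, this forces the support of $\gamma$ to be a monotone (non-decreasing) set in $\mathbb R^2$. For $p>1$ a crossing pair gives strict improvement, so crossings are excluded.

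The last step is to show that the only coupling of $\mu_1$ and $\mu_2$ with monotone support is $\gamma^\star$. For such a plan one has $\gamma((-\infty,x]\times(-\infty,y])=\min(F_1(x),F_2(y))$. This identity is exactly the joint CDF of $(\mu_1^{-1}(U),\mu_2^{-1}(U))$, so $\gamma=\gamma^\star$ and ``$\ge$'' follows.

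For $p=1$ strict convexity fails, so uniqueness breaks down, but optimality still holds. I would handle it by approximation: apply the result for $p>1$ and let $p\downarrow1$, using dominated convergence on the quantile side together with $\mathcal W_p\to\mathcal W_1$ for compactly supported measures. Here Assumption~\ref{ass:boundedness} gives compact support. Alternatively, for $p=1$ one can use the direct identity $\mathcal W_1=\int|F_1-F_2|\,dx$. I expect the main obstacle to be the step ``monotone support $\Rightarrow$ comonotone coupling'', particularly the handling of atoms and flat pieces of $F_i$. I would deal with these through the CDF-of-min identity rather than through maps, which avoids needing $\mu_1$ to be atomless.
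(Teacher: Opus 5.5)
Your argument is correct. Since the paper gives no proof of its own and only cites Santambrogio's book, the relevant comparison is with the textbook proof, which yours essentially reproduces: the comonotone coupling gives ``$\le$'', and for ``$\ge$'' cyclical monotonicity of optimal supports together with strict submodularity of $h(x-y)$ forces monotone support, which is then identified with $\gamma^\star$ through $\gamma((-\infty,x]\times(-\infty,y])=\min\{F_1(x),F_2(y)\}$.

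The one loose end is $p=1$. Your limit $p\downarrow1$ uses compact support borrowed from Assumption~\ref{ass:boundedness}, which the proposition itself does not assume. This is easy to fix: your $p>1$ argument used only strict convexity of $h$ and finiteness of the optimal value, so run it for $h_\varepsilon(z)=|z|+\varepsilon\sqrt{1+z^2}$ and let $\varepsilon\downarrow0$; this covers all $\mu_1,\mu_2$ with $\mathcal W_1(\mu_1,\mu_2)<\infty$.
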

	A direct consequence of the isometry in Proposition~\ref{prop:Quantile representation of Wasserstein} is that the Fr\'echet mean in the Wasserstein space corresponds to the standard Euclidean mean in the quantile space. This yields an explicit characterization of the objective in \eqref{eq:barycenter_def_main_text}, as demonstrated in the following Proposition.
	
	\begin{proposition}[Characterization of the Barycenter]\label{prop:barycenter_quantile_main}
		Under Assumption~\ref{ass:boundedness}, the barycenter $\mu(\pi)$ is unique and satisfies
		\[
		\mu(\pi)^{-1}(t)=\mathbb E\big[\mathcal Y[\pi(X)]^{-1}(t)\big],\qquad \forall t\in[0,1].
		\]
		The proof is given in Appendix~\ref{app:barycenter}.
	\end{proposition}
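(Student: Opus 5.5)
The plan is to transport the minimization in \eqref{eq:barycenter_def_main_text} to the space of quantile functions via Proposition~\ref{prop:Quantile representation of Wasserstein}, and there solve it as a Hilbert-space projection. Write $Q_\pi(t):=\mathcal Y[\pi(X)]^{-1}(t)$. For $\mu\in\mathcal P_2(\mathbb R)$ with quantile function $\mu^{-1}$, Proposition~\ref{prop:Quantile representation of Wasserstein} with $p=2$ gives
\[
\mathbb E\big[\mathcal W_2^2(\mu,\mathcal Y[\pi(X)])\big]=\mathbb E\int_0^1\big(\mu^{-1}(t)-Q_\pi(t)\big)^2\,dt .
\]
So the problem becomes minimizing $F(g):=\mathbb E\|g-Q_\pi\|_{L_2}^2$ over the set $\mathcal Q$ of quantile functions, i.e.\ left-continuous nondecreasing $g\in L_2([0,1])$. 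This set is exactly the image of $\mathcal P_2(\mathbb R)$.

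The first step is measurability and integrability. By Assumption~\ref{ass:boundedness}, for every $a$ we have $|Q_\pi(t)|\le M$ almost surely. Hence $\bar q(t):=\mathbb E[Q_\pi(t)]$ is well defined and bounded by $M$. Measurability of $(\omega,t)\mapsto Q_\pi(t)$ follows from left-continuity in $t$, which makes it the pointwise limit of measurable step functions; this justifies applying Fubini to the bounded integrand.

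The second step is the bias--variance decomposition. Expand $(g-Q_\pi)^2=(g-\bar q)^2+2(g-\bar q)(\bar q-Q_\pi)+(\bar q-Q_\pi)^2$, integrate in $t$, and take expectations using Fubini. The cross term vanishes because $\mathbb E[\bar q(t)-Q_\pi(t)]=0$ for each $t$. This yields
\[
F(g)=\|g-\bar q\|_{L_2}^2+\mathbb E\|Q_\pi-\bar q\|_{L_2}^2 ,
\]
where the second term is a constant independent of $g$. The minimizers of $F$ over $\mathcal Q$ are therefore the elements of $\mathcal Q$ closest to $\bar q$ in $L_2$.

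The third step, which I expect to carry the real content, is showing $\bar q\in\mathcal Q$, so that the unique minimizer is $\bar q$ itself. Monotonicity of $\bar q$ follows because each $Q_\pi$ is nondecreasing almost surely and expectation preserves order. Left-continuity follows by dominated convergence: take $s\uparrow t$, use the pointwise left-continuity of $Q_\pi$, and dominate by $M$. Thus $\bar q$ is the quantile function of some measure $\mu^\star$ supported in $[-M,M]$, so $\mu^\star\in\mathcal P_2(\mathbb R)$; it is the law of $\bar q(U)$ with $U$ uniform on $[0,1]$.

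For uniqueness, any other minimizer $\mu$ must satisfy $\|\mu^{-1}-\bar q\|_{L_2}=0$. Then $\mu^{-1}=\bar q$ almost everywhere, and two left-continuous nondecreasing functions that agree almost everywhere agree everywhere on $(0,1]$. Hence $\mu=\mu^\star$ by the correspondence between measures and quantile functions, and the pointwise identity $\mu(\pi)^{-1}(t)=\mathbb E[Q_\pi(t)]$ holds for all $t$ (with the usual convention at $t=0$, where it holds with the bounded extension). The main care point is this last a.e.-to-everywhere upgrade, together with the endpoint conventions.
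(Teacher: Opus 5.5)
Your proposal is correct and takes essentially the same route as the paper's proof in Appendix~\ref{app:barycenter}: the quantile isometry with Fubini, the bias--variance decomposition with a vanishing cross term, and a check that $\mathbb E[Q_\pi(\cdot)]\in\mathcal Q$ via monotonicity and dominated convergence. The only difference is the uniqueness step: the paper appeals to strict convexity of $q\mapsto\int_0^1|\mathbb E[Q(t)]-q(t)|^2\,dt$, which by itself gives uniqueness only up to a.e.\ equality, whereas your explicit a.e.-to-everywhere upgrade for left-continuous nondecreasing functions is what actually justifies the pointwise identity at every $t$.
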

	
	Proposition~\ref{prop:barycenter_quantile_main} shows that in the one-dimensional case, the barycenter geometry reduces to a mean quantile curve.
	Importantly, this does \emph{not} collapse the learning problem to scalar policy learning:
	the policy-induced object is the entire function
	\[
	q_\pi(t):=\mu(\pi)^{-1}(t)=\mathbb E[\mathcal Y[\pi(X)]^{-1}(t)] \in L_2([0,1]),
	\]
	and our regret analysis requires \emph{uniform learning over the product class} $\Pi\times[0,1]$. Moreover, we assume the utility function $U$ is $L_U$-Lipschitz with respect to $\mathcal W_2$:
	\begin{assumption}[Lipschitz utility]\label{ass:lipschitz_U_main}
		There exists $0<L_U<\infty$ such that for all $\nu_1,\nu_2\in\mathcal P_2(\mathbb R)$,
		\[
		|U(\nu_1)-U(\nu_2)|\le L_U\,\mathcal{W}_2(\nu_1,\nu_2).
		\]
	\end{assumption}
	This includes several common distributional objectives and ensures stability: small Wasserstein errors in $\mu(\pi)$ translate into small utility errors.
	By Assumption~\ref{ass:lipschitz_U_main} and Proposition~\ref{prop:barycenter_quantile_main}, we have
	\begin{equation}\label{eq:regret_by_quantile_main}
		\mathcal R(\pi)
		\le L_U \,\mathcal{W}_2(\mu(\pi^\star),\mu(\pi))
		= L_U\left(\int_0^1|q_{\pi^\star}(t)-q_\pi(t)|^2\,dt\right)^{1/2}.
	\end{equation}
	
	\subsection{Policy-class Complexity: Natarajan Dimension}
	Since $\Pi$ is multi-class, we quantify its complexity via the Natarajan dimension.
	
	\begin{definition}[Natarajan dimension]\label{def:natarajan_main}
		A set $\{x_1,\dots,x_r\}\subseteq\mathcal X$ is \emph{Natarajan-shattered} by $\Pi$ if there exist
		$f_1,f_2:\{x_1,\dots,x_r\}\to\mathcal A$ such that (i) $f_1(x_j)\neq f_2(x_j)$ for all $j$ and
		(ii) for every $S\subseteq[r]$ there exists $\pi\in\Pi$ with
		$\pi(x_j)=f_1(x_j)$ for $j\in S$ and $\pi(x_j)=f_2(x_j)$ for $j\notin S$.
		The Natarajan dimension $V:=\mathrm{N\text{-}dim}(\Pi)$ is the maximum $r$ such that some set of
		size $r$ is shattered.
	\end{definition}
	
	\section{Main Results}\label{sec:estimators}
	
	\noindent \textbf{Identification.} We study statistical guarantees for learning $\pi$ by maximizing an empirical utility $U(\hat\mu(\pi))$.
	By Proposition~\ref{prop:barycenter_quantile_main}, the population target is the mean quantile curve
	$q_\pi(t)=\mu(\pi)^{-1}(t)=\mathbb E[\mathcal Y[\pi(X)]^{-1}(t)]$.
	Under Assumptions~\ref{ass:consistency}--\ref{ass:overlap}, $q_\pi(t)$ admits standard identification formulas. Define the outcome regression model
	\[
	m_0(a,x)(t):=\mathbb E[\mathcal Y^{-1}(t)\mid A=a,X=x],
	\]
	which equals $\mathbb E[\mathcal Y[a]^{-1}(t)\mid X=x]$ by unconfoundedness.
	Then for each fixed $(\pi,t)$,
	\[
	q_\pi(t)=\mathbb E\!\left[\frac{\mathbf 1_{\{A=\pi(X)\}}\,\mathcal Y^{-1}(t)}{f_0(A|X)}\right]
	=\mathbb E\big[m_0(\pi(X),X)(t)\big].
	\]
	This identification is useful for constructing IPW and DR estimators.
	\paragraph{Statistical barrier: bound over $\Pi \times [0,1]$.} The central difficulty lies in establishing concentration inequalities that hold simultaneously over both the complex policy class $\pi \in \Pi$ and the functional index $t \in [0,1]$. Unlike scalar policy learning where the target is a single value, here the index set is the product of a multi-class hypothesis space and a continuum domain. 
	
	\paragraph{Monotone rearrangement for $\hat\mu(\pi)$.}
	In practice, a raw estimator $\hat{q}_\pi$ (e.g., via IPW or DR) is not guaranteed to be non-decreasing, and thus may fail to be a valid quantile function. To ensure that the induced measure $\hat{\mu}(\pi)$ is well-defined, we employ the standard monotone rearrangement operator. Specifically, let $\mathcal{Q}$ denote the set of left-continuous, non-decreasing functions mapping $[0, 1]$ to $\mathbb{R}$. We define the calibrated quantile function as:
	$\hat{q}_\pi^\uparrow := \Pi_{\mathcal{Q}}(\hat{q}_\pi)$, 
	where $\Pi_{\mathcal{Q}}$ is the projection that finds the nearest element in $\mathcal{Q}$ with respect to a suitable norm. The estimated measure $\hat{\mu}(\pi)$ is then uniquely defined as the distribution whose quantile function is $\hat{q}_\pi^\uparrow$. Crucially, such rearrangement operators are known to be non-expansive contractions and thus do not increase the estimation error and can be treated as a standard technicality in the subsequent regret analysis.
	
	\subsection{Regret Upper Bound for IPW Formulation}\label{sec:Policy Learning IPW}
	
	We now formally define the Inverse Propensity Weighting (IPW) estimator for the distributional setting. For a fixed policy $\pi$ and any quantile level $t\in[0,1]$, Proposition~\ref{prop:barycenter_quantile_main} establishes that the target barycenter satisfies $q_\pi(t)=\mathbb E[\mathcal Y[\pi(X)]^{-1}(t)]$. Under the assumption of unconfoundedness and strict overlap, we can identify this quantity from observational data via the inverse-propensity representation:
	\[
	q_\pi(t)=\mathbb E\Big[\frac{\mathbf 1\{A=\pi(X)\}}{f_0(A|X)}\ \mathcal Y^{-1}(t)\Big].
	\]
	The empirical IPW estimator for the quantile curve is defined as
	\[
	\hat q^{\mathrm{IPW}}_\pi(t)
	:=
	\mathbb P_N\!\left[\frac{\mathbf 1_{\{A=\pi(X)\}}\mathcal Y^{-1}(t)}{f_{0}(A|X)}\right].
	\]
	While $\hat q^{\mathrm{IPW}}_\pi(t)$ is unbiased pointwise, it is not guaranteed to be monotonically increasing with respect to $t$, which violates the definition of a valid quantile function. To enforce validity, we apply monotone rearrangement, denoted by the projection operator $\Pi_{\mathcal Q}$. Let $\hat q^{\mathrm{IPW},\uparrow}_\pi := \Pi_{\mathcal Q}(\hat q^{\mathrm{IPW}}_\pi)$ be the projected estimator. We then define the estimated policy-induced barycenter $\hat\mu^{\mathrm{IPW}}(\pi)$ as the measure whose quantile function corresponds to this valid curve, i.e., $(\hat\mu^{\mathrm{IPW}}(\pi))^{-1} = \hat q^{\mathrm{IPW},\uparrow}_\pi$. Finally, the optimal policy is estimated by maximizing the empirical utility over the policy class $\Pi$:
	\[
	\hat\pi^{\mathrm{IPW}}\in\underset{\pi\in\Pi}{\arg\max}\;U\big(\hat{\mu}^{\mathrm{IPW}}(\pi)\big).
	\]
	
	\begin{theorem}\label{thm:ipw_main}
		Fix a confidence level $\delta\in(0,1)$ and a uniform grid $\mathcal T_{\mathrm{par}}=\big\{0=t_{0}<t_{1}<\cdots<t_{\mathcal{J}}=1:t_{j}=\frac{j}{\mathcal{J}},\;0\leq j\leq\mathcal{J}\big\}$ with mesh size $\eta:=1/\mathcal J$.
		Suppose Assumptions~\ref{ass:consistency}--\ref{ass:quantile_lipschitz} hold, and let the policy class complexity be bounded by $N\ge \mathrm{N\text{-}dim}(\Pi) \ge 1$.
		Furthermore, let $d=|\mathcal A|$ denote the number of actions, $\underline{f}$ be the overlap lower bound, $M$ be the uniform bound on the outcome quantiles, $\mathcal{U}$ be the Lipschitz constant of the quantile curves with respect to $t$, and $L_U$ be the Lipschitz constant of the utility functional $U$.
		Then, with probability at least $1-\delta$, the regret of the IPW learned policy satisfies:
		\[
		\mathcal{R}(\hat{\pi}^{\mathrm{IPW}})
		\le
		\frac{2L_{U}M}{\underline f}\left(
		\sqrt{\frac{2\mathrm{N\text{-}dim}(\Pi)\log(e\cdot N\cdot d)}{N}}
		+
		\sqrt{\frac{2\log\left(\frac{2(\mathcal{J}+1)}{\delta}\right)}{N}}
		\right)
		+
		\left(\frac{4L_U \cdot \eta 
			\cdot \mathcal U}{\underline f}\right).
		\]
	\end{theorem}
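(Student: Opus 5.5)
We reduce the regret to a uniform deviation bound for the raw IPW quantile curves over $\Pi\times[0,1]$. We then discretize $t$ on the grid $\mathcal T_{\mathrm{par}}$ and handle each grid point with a Natarajan-dimension argument plus a union bound over the $\mathcal J+1$ levels.

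\textbf{Step 1 (regret decomposition).} Since $\hat\pi^{\mathrm{IPW}}$ maximizes $U(\hat\mu^{\mathrm{IPW}}(\pi))$, the usual argument gives
\[
\mathcal R(\hat\pi^{\mathrm{IPW}}) = \big[U(\mu(\pi^\star))-U(\hat\mu^{\mathrm{IPW}}(\pi^\star))\big]+\big[U(\hat\mu^{\mathrm{IPW}}(\pi^\star))-U(\hat\mu^{\mathrm{IPW}}(\hat\pi))\big]+\big[U(\hat\mu^{\mathrm{IPW}}(\hat\pi))-U(\mu(\hat\pi))\big].
\]
The middle term is nonpositive, so $\mathcal R\le 2L_U\sup_{\pi}\mathcal W_2(\mu(\pi),\hat\mu^{\mathrm{IPW}}(\pi))$ by Assumption~\ref{ass:lipschitz_U_main}. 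By Proposition~\ref{prop:Quantile representation of Wasserstein} and Proposition~\ref{prop:barycenter_quantile_main}, this $\mathcal W_2$ distance equals $\|q_\pi-\hat q^{\mathrm{IPW},\uparrow}_\pi\|_{L_2}$. Because $q_\pi$ is nondecreasing, the rearrangement is non-expansive, so this is at most $\|q_\pi-\hat q^{\mathrm{IPW}}_\pi\|_{L_2}\le\sup_{t\in[0,1]}|q_\pi(t)-\hat q^{\mathrm{IPW}}_\pi(t)|$. It therefore suffices to bound $\Delta:=\sup_{\pi,t}|\hat q^{\mathrm{IPW}}_\pi(t)-q_\pi(t)|$ by half the bracketed quantity.

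\textbf{Step 2 (discretization in $t$).} For $t\in[t_j,t_{j+1}]$, Assumption~\ref{ass:quantile_lipschitz} together with $1/f_0\le 1/\underline f$ gives $|\hat q^{\mathrm{IPW}}_\pi(t)-\hat q^{\mathrm{IPW}}_\pi(t_j)|\le \mathcal U\eta/\underline f$, and the same bound holds for $q_\pi$. Hence $\Delta\le\max_j\sup_\pi|\hat q^{\mathrm{IPW}}_\pi(t_j)-q_\pi(t_j)|+2\mathcal U\eta/\underline f$. After multiplying by $2L_U$, this produces exactly the term $4L_U\eta\,\mathcal U/\underline f$.

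\textbf{Step 3 (uniform over $\Pi$ at a fixed $t_j$; the main obstacle).} Fix $t_j$ and set $Z_i(\pi)=\mathbf 1\{A_i=\pi(X_i)\}\mathcal Y_i^{-1}(t_j)/f_0(A_i|X_i)$, which lies in $[-M/\underline f,M/\underline f]$ by Assumption~\ref{ass:boundedness}. On the sample, $\pi$ enters only through $(\pi(X_1),\dots,\pi(X_N))$. The Natarajan lemma bounds the number of such patterns by $N^V d^{2V}\le (eNd)^{2V}$; some care is needed to get a bound of the form $(eNd)^{V}$ compatible with the stated constant $2V\log(eNd)$.

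The constant suggests a Rademacher/Massart route rather than a plain finite union bound on the population side, since $\pi$ ranges over an infinite class. Concretely, I would use symmetrization to bound $\mathbb E\sup_\pi|\mathbb P_N Z(\pi)-\mathbb E Z(\pi)|$ by $2$ times the Rademacher complexity. Massart's finite-class lemma then bounds this by $(M/\underline f)\sqrt{2\log(\#\text{patterns})/N}$. A bounded-differences (McDiarmid) step, with each coordinate's change at most $2M/(\underline fN)$, supplies the deviation term $(M/\underline f)\sqrt{2\log(2(\mathcal J+1)/\delta)/N}$ after a union bound over the $\mathcal J+1$ grid points at level $\delta/(\mathcal J+1)$. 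The factor $2$ inside that logarithm handles the two-sided deviation.

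The delicate part is matching constants: the symmetrization factor $2$ must be absorbed into the leading $2L_UM/\underline f$ prefactor together with the $\sqrt2$ and the pattern count. If the constants do not close exactly, I would state the result up to the same form with an adjusted numerical constant. Combining Steps 1–3 on the resulting probability-$(1-\delta)$ event yields the claimed bound.
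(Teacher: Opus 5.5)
Your Steps 1 and 2 match the paper's proof: the same three-term regret split, the quantile isometry plus non-expansiveness of $\Pi_{\mathcal Q}$ to reduce to $\sup_{\pi,t}|\hat q^{\mathrm{IPW}}_\pi(t)-q_\pi(t)|$, and the same grid-to-continuum error $2\mathcal U\eta/\underline f$. Step 3 is where you part ways, and it is where your proposal falls short of the displayed inequality.

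\textbf{How the paper gets its constants.} The paper never symmetrizes. For fixed $t$ it conditions on $X_{1:N}$ and applies Hoeffding to the IPW average for each of the at most $m_\Pi(N)$ labelings realized on the sample. The summands lie in $[-M/\underline f,M/\underline f]$, which gives the tail $2\exp(-N\varepsilon^2\underline f^2/(2M^2))$. It then takes a union bound over labelings and over the $\mathcal J+1$ grid points, inserts its Sauer-type lemma $\log m_\Pi(N)\le V\log(eNd)$, and splits with $\sqrt{a+b}\le\sqrt a+\sqrt b$. The displayed constants are therefore Hoeffding-plus-union-bound constants, not Massart constants. In particular, the $2$ in $\log(2(\mathcal J+1)/\delta)$ is Hoeffding's two-sided factor.

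\textbf{Why your route cannot reach them.} Symmetrization followed by Massart's lemma with absolute values gives $\mathbb E\sup_\pi|\cdot|\le \frac{2M}{\underline f}\sqrt{2\log(2m_\Pi(N))/N}$. So even if you grant the paper's pattern bound, your bound on the complexity term is more than twice the stated $\frac{M}{\underline f}\sqrt{2V\log(eNd)/N}$. As written, you prove the theorem only with inflated numerical constants, which is a weaker statement. Your closing fallback sentence concedes exactly this.

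\textbf{Your suspicion of the shortcuts is justified.} The two steps you distrusted are precisely the ones the paper relies on.
\begin{itemize}
\item Conditionally on $X_{1:N}$, the summands for a fixed labeling are centred at $\frac1N\sum_i m_0(\pi(X_i),X_i)(t)$, not at $q_\pi(t)$. Two policies with the same sample labeling share the $\mathbb P_N$ term but not $q_\pi(t)$. So a union bound over sample labelings does not control $\sup_\pi|(\mathbb P_N-\mathbb P)(\cdot)|$ without a symmetrization or ghost-sample step, and that step costs constants.
\item The bound $(eNd)^V$ is not what the Natarajan or Haussler--Long lemma gives; the latter gives $\sum_{i\le V}\binom{N}{i}\binom{d}{2}^{i}$. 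The bound $(eNd)^V$ can in fact fail. Take $V=1$ and $N=2$: a labeling set on two points with no Natarajan-shattered pair is a rectangle-free subset of $\mathcal A\times\mathcal A$, and such a set can have order $d^{3/2}$ elements, which exceeds $2ed$ for large $d$.
\end{itemize}

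So your argument rigorously delivers the right $\widetilde{\mathcal O}(\sqrt{V/N})+\mathcal O(\eta)$ structure. The exact constants in the statement rest on the paper's shortcut, not on anything that more careful bookkeeping in your route would recover.
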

	
	\paragraph{Proof sketch.}
	The proof is detailed in Appendix~\ref{app:proof_ipw}. The proof idea proceeds in two main steps. 
	
	(1) Reduction to uniform deviation via geometry.
	We first relate the regret $\mathcal{R}(\hat{\pi}^{\mathrm{IPW}})$ to the estimation error of the barycenter. Using the Lipschitz property of the utility $U$ in Assumption \ref{ass:lipschitz_U_main} and the argmax definition of the estimator, we bound the regret by the worst-case Wasserstein distance  $\sup_{\pi}\mathcal{W}_2(\hat\mu^{\mathrm{IPW}}(\pi),\mu(\pi))$. Crucially, utilizing the Wasserstein-quantile isometry and the non-expansive property of the monotone projection operator, we reduce this geometric error to the uniform deviation of the raw quantile curves in the $L_\infty$ norm: $\sup_{\pi}\sup_t|\hat q_\pi^{\mathrm{IPW}}(t)-q_\pi(t)|$.
	
	(2) Combinatorial complexity.
	Since the index $t$ is continuous, a direct union bound is infeasible. We employ a covering argument by discretizing $t$ onto a grid $\mathcal T_{\mathrm{par}}$. By the Lipschitz continuity of the outcome quantiles (Assumption~\ref{ass:quantile_lipschitz}), we control the approximation error between grid points. On the grid, the complexity of the function class is driven by the policy search. We control this complexity using the Natarajan dimension of $\Pi$, applying a multi-class Sauer's lemma to bound the growth function and Hoeffding’s inequality to bound the point-wise deviations.
	
	Finally, we combine the statistical estimation error (governed by the sample size $N$ and policy complexity $\mathrm{N\text{-}dim}(\Pi)$) with the deterministic discretization error (governed by the grid mesh $\eta$). The resulting bound captures the trade-off between grid resolution and statistical variance.
	
	\paragraph{Interpretation.}
	The upper bound in Theorem~\ref{thm:ipw_main} reveals the statistical nature of learning with distributional outcomes. The first term represents the stochastic estimation error, which scales as $\widetilde{\mathcal O}(N^{-1/2})$ up to logarithmic factors. This term depends on the difficulty of the policy search (measured by $\mathrm{N\text{-}dim}(\Pi)$), the overlap ($\underline{f}^{-1}$), and the number of grid points ($\log \mathcal{J}$). Notably, the dependence on the functional dimensionality (represented by the grid size $\mathcal{J}$) is only logarithmic, indicating that the infinite-dimensional nature of the outcome does not incur a polynomial penalty in sample complexity.
	The second term represents the approximation bias $\mathcal{O}(\eta)$, arising from discretizing the quantile curves. This highlights a bias-variance trade-off: a finer grid (smaller $\eta$) reduces bias but increases the logarithmic stochastic error. However, by choosing $\mathcal{J} \asymp \sqrt{N}$ and given the definition $\eta = 1/\mathcal{J}$, the bias term becomes negligible ($\mathcal{O}(N^{-1/2})$), allowing the estimator to achieve the parametric rate $\widetilde{\mathcal O}(\sqrt{V/N})$ typical of scalar policy learning, without introducing an additional leading-order nonparametric penalty under the one-dimensional quantile-isometry reduction.

	\subsection{Regret Upper Bound for DR Formulation}\label{sec:Policy Learning DR}
	While the IPW estimator is unbiased, it can be inefficient and may suffer from high variance, especially when propensity scores are close to zero. To mitigate this and reduce sensitivity to nuisance estimation errors, we adopt the cross-fitted Doubly Robust (DR) estimator \citep{chernozhukov2018double}. This estimator augments the IPW objective with a regression-based objective, improving stability through Neyman orthogonality.
	
	We employ $L$-fold cross-fitting to decouple the nuisance estimation from the policy evaluation. The sample is randomly partitioned into $L$ disjoint folds $\mathcal I_1,\dots,\mathcal I_L$ of size $n:=N/L$. For each fold $\ell$, we construct nuisance estimators $\hat f_0^\ell$ (propensity score) and $\hat m_0^\ell$ (conditional outcome quantile) using data from the complement folds $\mathcal I_{-\ell}:=\cup_{j\neq \ell}\mathcal I_j$.
The cross-fitted DR estimator for the quantile curve $q_\pi(t)$ is defined as:
	\[
	\hat q^{\mathrm{DR}}_\pi(t)
	:=
	\frac{1}{L}\sum_{\ell=1}^L
	\mathbb P_{n,\ell}\!\left[
	\hat m_0^\ell(\pi(X),X)(t)
	+\frac{\mathbf 1_{\{A=\pi(X)\}}}{\hat f_0^\ell(A|X)}
	\big(\mathcal Y^{-1}(t)-\hat m_0^\ell(\pi(X),X)(t)\big)
	\right],
	\]
	where $\mathbb P_{n,\ell}$ denotes the empirical average over fold $\mathcal I_\ell$.
	Similar to the IPW case, we apply the monotone rearrangement operator $\Pi_{\mathcal Q}$ to obtain a valid quantile curve $\hat q^{\mathrm{DR},\uparrow}_\pi := \Pi_{\mathcal Q}(\hat q^{\mathrm{DR}}_\pi)$, and define the induced barycenter $\hat\mu^{\mathrm{DR}}(\pi)$ via $(\hat\mu^{\mathrm{DR}}(\pi))^{-1} = \hat q^{\mathrm{DR},\uparrow}_\pi$.
	The optimal policy is then learned by maximizing the empirical utility:
	\[
	\hat{\pi}^{\mathrm{DR}}\in\underset{\pi\in\Pi}{\arg\max}\; U(\hat{\mu}^{\mathrm{DR}}(\pi)).
	\]
	Moreover, for each fold $\ell \in \{1,\dots,L\}$, define the uniform estimation errors:
	\begin{equation*}
		\begin{aligned}
			\|\hat f_0^\ell-f_0\|_\infty&:=\sup_{x\in\mathcal X,a\in\mathcal A}\big|\hat f_0^\ell(a|x)-f_0(a|x)\big|,\\
			\|\hat m_0^\ell-m_0\|_{\infty,[0,1]}&:=\underset{\substack{x,a\\t\in[0,1]}}{\sup}\big|\hat m_0^\ell(a,x)(t)-m_0(a,x)(t)\big|.
		\end{aligned}
	\end{equation*}
	Now we provide the main theorem for the regret bound of the cross-fitted DR formulation.

	\begin{theorem}\label{thm:statistical guarantee DR}
		Fix a confidence level $\delta \in (0,1/4)$ and a uniform grid $\mathcal{T}_{\mathrm{par}}$ with mesh size $\eta = 1/\mathcal{J}$.
		Let $d=|\mathcal A|$, $L$ be the number of folds, and $N \ge \mathrm{N\text{-}dim}(\Pi) \ge 1$.
		Assume the outcome is bounded by $M$ and the overlap is bounded by $\underline{f}$.
		Suppose Assumptions~\ref{ass:consistency}--\ref{ass:quantile_lipschitz} and Assumption~\ref{ass:lipschitz_U_main} hold.
		Further, for each fold $\ell$ and every $\gamma\in(0,1)$, assume that with probability at least $1-\gamma$,
		\begin{equation*}
			\begin{aligned}
				\|\hat{m}_{0}^{\ell}-m_{0}\|_{\infty,[0,1]}
				\leq\operatorname{Rate}_{m_{0}}(N,\gamma),
				\qquad
				\|\hat{f}_{0}^{\ell}-f_{0}\|_{\infty}
				\leq \operatorname{Rate}_{f_{0}}(N,\gamma).
			\end{aligned}
		\end{equation*}
		When $\hat f_0^\ell$ appears in an inverse-propensity weight, we maintain the bounded-away-from-zero convention $\hat f_0^\ell(a|x)\ge \underline f$.
		Assume additionally that the fitted outcome-quantile curves are uniformly Lipschitz in $t$, so that, for all folds $\ell$, actions $a$, contexts $x$, and $s,t\in[0,1]$,
		\[
		\left|\left(m_0(a,x)-\hat m_0^\ell(a,x)\right)(t)-\left(m_0(a,x)-\hat m_0^\ell(a,x)\right)(s)\right|
		\le 2\mathcal U |t-s|,
		\]
		with the fitted Lipschitz constant absorbed into $\mathcal U$.
		Define
		\[
		r_f:=\operatorname{Rate}_{f_{0}}\left(N,\frac{\delta}{3L}\right),
		\qquad
		r_m:=\operatorname{Rate}_{m_{0}}\left(N,\frac{\delta}{3L}\right).
		\]
		Without loss of generality, the rate functions are taken to be nonincreasing in the confidence parameter after replacing them by their monotone envelopes. Also define
		\[
		\mathcal{V}_N(\Pi,\delta)
		:=
		\sqrt{\frac{2L\operatorname{N\text{-}dim}(\Pi)\log\left(e\frac{N}{L}d\right)}{N}}
		+
		\sqrt{\frac{2L\log\left(\frac{8\mathcal{J}L}{\delta}\right)}{N}}.
		\]
		Then, with probability at least $1-4\delta$, the regret satisfies
		\begin{equation}
		\label{eq:dr_regret_rearranged}
		\mathcal{R}(\hat{\pi}^{\mathrm{DR}})
		\leq
		C_{\mathrm{or}}\mathcal{V}_N(\Pi,\delta)
		+
		C_{\mathrm{grid}}\eta
		+
		\mathrm{Rem}_{\mathrm{nuis}}(N,\delta,\eta),
		\end{equation}
		where
		\[
		C_{\mathrm{or}}
		:=
		4L_U M\left(1+\frac{2}{\underline f}\right),
		\qquad
C_{\mathrm{grid}}
:=
2L_U\mathcal U\left(4+\frac{6}{\underline f}\right),
		\]
		and the nuisance-induced remainder is
		\begin{align*}
		\mathrm{Rem}_{\mathrm{nuis}}(N,\delta,\eta)
		:=\;&
		2L_U
		\Bigg[
		\left(
		\frac{2r_fr_m}{\underline f^2}
		+
		\frac{4Mr_f}{\underline f^2}
		+
		2\left(1+\frac{1}{\underline f}\right)r_m
		\right)
		\mathcal{V}_N(\Pi,\delta)
		\\
		&\qquad\qquad
		+
		\frac{5r_fr_m}{\underline f^2}
		+
		\frac{4\mathcal U r_f}{\underline f^2}\eta
		\Bigg].
		\end{align*}
		Equivalently, there exists a constant $C>0$, depending only on $L_U,M,\underline f,\mathcal U,L$, and $d$, such that
		\[
		\mathcal R(\hat\pi^{\mathrm{DR}})
		\le
		C\left[
		\mathcal V_N(\Pi,\delta)
		+
		\eta
		+
		r_fr_m
		+
		r_f\mathcal V_N(\Pi,\delta)
		+
		r_m\mathcal V_N(\Pi,\delta)
		+
		r_f\eta
		\right].
		\]
	\end{theorem}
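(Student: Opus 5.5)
The argument has the same skeleton as the IPW case (Theorem~\ref{thm:ipw_main}). First I reduce the regret to a uniform deviation. Since $\hat\pi^{\mathrm{DR}}$ maximizes $U(\hat\mu^{\mathrm{DR}}(\cdot))$, Assumption~\ref{ass:lipschitz_U_main} gives
\[
\mathcal R(\hat\pi^{\mathrm{DR}})\le 2L_U\sup_{\pi\in\Pi}\mathcal W_2\big(\hat\mu^{\mathrm{DR}}(\pi),\mu(\pi)\big).
\]
By Proposition~\ref{prop:Quantile representation of Wasserstein}, Proposition~\ref{prop:barycenter_quantile_main}, and the fact that $q_\pi\in\mathcal Q$ so that $\Pi_{\mathcal Q}$ is non-expansive, this is at most
\[
2L_U\sup_{\pi}\|\hat q^{\mathrm{DR}}_\pi-q_\pi\|_{L_2}\le 2L_U\sup_{\pi,t}|\hat q^{\mathrm{DR}}_\pi(t)-q_\pi(t)|.
\]
I then work fold by fold. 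On fold $\ell$, with $\hat m:=\hat m_0^\ell$, $\hat f:=\hat f_0^\ell$, $\Gamma_\pi(t):=m_0(\pi(X),X)(t)+\frac{\mathbf 1\{A=\pi(X)\}}{f_0(A|X)}(\mathcal Y^{-1}(t)-m_0(\pi(X),X)(t))$ the oracle score and $\hat\Gamma_\pi(t)$ its plug-in version, I decompose
\[
\mathbb P_{n,\ell}\hat\Gamma_\pi(t)-q_\pi(t)=\underbrace{(\mathbb P_{n,\ell}-\mathbb P)\Gamma_\pi(t)}_{\text{oracle}}+\underbrace{(\mathbb P_{n,\ell}-\mathbb P)(\hat\Gamma_\pi-\Gamma_\pi)(t)}_{\text{empirical nuisance}}+\underbrace{\mathbb P(\hat\Gamma_\pi-\Gamma_\pi)(t)}_{\text{bias}}.
\]
Here $\mathbb P$ is conditional on $\mathcal I_{-\ell}$, so by cross-fitting $\hat m,\hat f$ are fixed functions independent of the fold-$\ell$ data.

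For the oracle term, $|\Gamma_\pi(t)|\le M(1+2/\underline f)$. On the grid $\mathcal T_{\mathrm{par}}$ I repeat the IPW argument on $n=N/L$ samples. Conditional on the fold's covariates and actions, $\pi$ enters only through its restriction $(\pi(X_i))_{i\in\mathcal I_\ell}$, and the multiclass Sauer lemma bounds the number of such patterns by $(e\,n\,d)^{V}$, $V:=\mathrm{N\text{-}dim}(\Pi)$. I apply Hoeffding to each pattern and grid point and take a union bound over $\mathcal J+1$ grid points, $L$ folds, and the patterns. Using $\sqrt{a+b}\le\sqrt a+\sqrt b$ gives a bound of order $M(1+2/\underline f)\mathcal V_N(\Pi,\delta)$; the $1/\sqrt n=\sqrt{L/N}$ scaling produces the factor $L$ inside $\mathcal V_N$. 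Off-grid I interpolate: by Assumption~\ref{ass:quantile_lipschitz} both $\Gamma_\pi$ and $q_\pi$ are Lipschitz in $t$ with constant $\mathcal U(1+2/\underline f)$, contributing a term $\propto\mathcal U\eta$ to $C_{\mathrm{grid}}$.

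For the bias term, I compute it exactly by conditioning on $X$ (this uses unconfoundedness):
\[
\mathbb P(\hat\Gamma_\pi-\Gamma_\pi)(t)=\mathbb E\Big[\frac{(\hat f-f_0)(\pi(X)|X)}{\hat f(\pi(X)|X)}\,(m_0-\hat m)(\pi(X),X)(t)\Big].
\]
This is the Neyman-orthogonality product structure, and its absolute value is at most $r_f r_m/\underline f$ uniformly in $(\pi,t)$ on the event (probability $\ge 1-2\delta/3$ after a union over folds) where both nuisance rates hold at level $\delta/(3L)$. The stated constant $5r_fr_m/\underline f^2$ leaves slack for cruder bounding.

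The main obstacle is the empirical-nuisance term, which needs a uniform bound over $\Pi\times[0,1]$. I write
\[
\hat\Gamma_\pi-\Gamma_\pi=(\hat m-m_0)\Big(1-\frac{\mathbf 1}{\hat f}\Big)+\mathbf 1\,(\mathcal Y^{-1}-m_0)\Big(\frac1{\hat f}-\frac1{f_0}\Big)+\mathbf 1\,(m_0-\hat m)\Big(\frac1{\hat f}-\frac1{f_0}\Big).
\]
On the nuisance event, with $1/\hat f-1/f_0$ bounded by $r_f/\underline f^2$ and using the clipping convention, these pieces have envelopes $(1+1/\underline f)r_m$, $2Mr_f/\underline f^2$ and $r_fr_m/\underline f^2$ respectively. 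Conditionally on $\mathcal I_{-\ell}$ each piece is again a fixed function indexed by $(\pi(x),t)$. The same Sauer-plus-Hoeffding argument on the grid therefore yields (the envelope)$\times\mathcal V_N(\Pi,\delta)$, which is exactly the coefficient structure in $\mathrm{Rem}_{\mathrm{nuis}}$. The off-grid interpolation uses the assumed $2\mathcal U$-Lipschitzness of $m_0-\hat m$ and gives the $r_f\mathcal U\eta/\underline f^2$ term together with the remaining part of $C_{\mathrm{grid}}$. Finally, averaging over $\ell$, multiplying by $2L_U$, and union-bounding the Hoeffding events and the nuisance events gives total failure probability at most $4\delta$. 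The delicate bookkeeping is making all envelopes hold simultaneously, which is why the rate functions are taken nonincreasing in $\gamma$.
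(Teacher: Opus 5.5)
Your plan follows the paper's proof closely. It uses the same reduction to $2L_U\sup_{\pi,t}|\hat q^{\mathrm{DR}}_\pi(t)-q_\pi(t)|$, the same split into an oracle process, centered nuisance processes and an orthogonality-driven product bias, and the same Sauer--Hoeffding bound on the grid followed by Lipschitz interpolation. Your exact bias formula even gives the sharper $r_fr_m/\underline f$.

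Two bookkeeping slips need fixing. First, your three-piece expansion of $\hat\Gamma_\pi-\Gamma_\pi$ double-counts the cross term. With $1/\hat f$ in the first piece, the correct identity has only two pieces, $(\hat m-m_0)(1-\mathbf 1/\hat f)+\mathbf 1(\mathcal Y^{-1}-m_0)(1/\hat f-1/f_0)$. Your three-piece version holds only with $1/f_0$ in the first piece, and that is exactly the paper's grouping into $(\mathbf 1/f_0-1)(m_0-\hat m)$ and $\mathbf 1(\mathcal Y^{-1}-\hat m)(1/\hat f-1/f_0)$. Second, the $5$ in $5r_fr_m/\underline f^2$ is not slack. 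It is $1$ from the bias plus $4$ from the $t$-oscillation of the propensity-weighted piece, which the paper bounds using $\|\hat m-m_0\|_{\infty,[0,1]}\le r_m$ rather than Lipschitz continuity. If you interpolate your cross piece with the fitted $2\mathcal U$-Lipschitz bound instead, you get $8\mathcal U r_f\eta/\underline f^2$, double the stated term.

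A more substantive point, which the paper's own Step~III shares, concerns the Hoeffding step. Conditioning on the held-out covariates (and actions) freezes the labeling patterns. But Hoeffding then centers $\mathbb P_{n,\ell}\Gamma_\pi(t)$ at $\frac1n\sum_{i\in\mathcal I_\ell}m_0(\pi(X_i),X_i)(t)$, not at $q_\pi(t)$. So the covariate-level process $\sup_\pi|(\mathbb P_{n,\ell}-\mathbb P)\,m_0(\pi(X),X)(t)|$ is never controlled. It is of the same order $M\mathcal V_N(\Pi,\delta)$ and can be handled by ghost-sample symmetrization (Sauer on $2n$ points) or by Massart's lemma plus McDiarmid, at the cost of constants.

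For the nuisance pieces you can avoid this altogether:
\begin{itemize}
\item The outcome piece with $1/f_0$ is exactly centered given the covariates alone, because $\mathbb E[1-\mathbf 1_{\{A=\pi(X)\}}/f_0(A|X)\mid X]=0$. Do not also condition on actions here, since that makes the piece deterministic.
\item The piece $\mathbf 1(\mathcal Y^{-1}-m_0)(1/\hat f-1/f_0)$ is exactly centered given $(X,A)$.
\item Only the oracle's $m_0(\pi(X),X)$ part and the product-order cross piece need symmetrization.
\end{itemize}
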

	
	\paragraph{Proof sketch.}
	The proof is detailed in Appendix~\ref{app:proof_DR}.
	The regret is first reduced to the uniform $L_2$ deviation of the estimated policy-indexed quantile curve:
	\[
	\mathcal R(\hat\pi^{\mathrm{DR}})
	\le
	2L_U
	\left(
	\int_0^1
	\left[
	\sup_{\pi\in\Pi}
	\left|
	\hat q_\pi^{\mathrm{DR}}(t)-q_\pi(t)
	\right|
	\right]^2dt
	\right)^{1/2}.
	\]
	The DR score is then decomposed into an oracle empirical-process term, a second-order product bias term, and nuisance-dependent centered empirical-process terms.
	The key point is Neyman orthogonality: the first-order conditional mean terms in the propensity and outcome-regression errors vanish after conditioning on the training folds used to estimate the nuisance functions.
	Consequently, the standalone nuisance bias is of product order $r_fr_m$; the first-order terms $r_f$ and $r_m$ appear only multiplied by the policy-complexity factor $\mathcal V_N(\Pi,\delta)$ or the grid mesh $\eta$.
	
	\paragraph{Rate interpretation and Neyman orthogonality.}
	The bound separates the oracle policy-learning error from the nuisance-induced error. Up to constants depending only on $L_U,M,\underline f,\mathcal U,L$, and $d$, Theorem~\ref{thm:statistical guarantee DR} implies
	\[
	\mathcal R(\hat\pi^{\mathrm{DR}})
	\lesssim
	\mathcal V_N(\Pi,\delta)
	+
	\eta
	+
	r_fr_m
	+
	r_f\mathcal V_N(\Pi,\delta)
	+
	r_m\mathcal V_N(\Pi,\delta)
	+
	r_f\eta.
	\]
	The key point is that Neyman orthogonality removes the standalone first-order nuisance bias. The only non-centered nuisance bias is of product order $r_fr_m$. The terms involving $r_f$ or $r_m$ alone are centered empirical-process or discretization remainders and are multiplied by $\mathcal V_N(\Pi,\delta)$ or $\eta$.
	Consequently, if
	\[
	r_f=\mathcal{O}(N^{-\alpha_f}),
	\qquad
	r_m=\mathcal{O}(N^{-\alpha_m}),
	\qquad
	\mathcal V_N(\Pi,\delta)=\widetilde{\mathcal O}(N^{-1/2}),
	\qquad
	\eta=\mathcal{O}(N^{-1/2}),
	\]
	then
	\[
	\mathcal R(\hat\pi^{\mathrm{DR}})
	=
	\widetilde{\mathcal O}\!\left(
	N^{-1/2}
	+
	N^{-(\alpha_f+\alpha_m)}
	+
	N^{-(\alpha_f+1/2)}
	+
	N^{-(\alpha_m+1/2)}
	\right).
	\]
	In particular, if both nuisance estimators converge at the $N^{-1/4}$ rate, then
	\[
	\mathcal R(\hat\pi^{\mathrm{DR}})
	=
	\widetilde{\mathcal O}(N^{-1/2}).
	\]
	
	\section{Minimax Lower Bound}\label{sec:lower_bound_main}
	To complement the upper bounds derived in the previous sections, we now establish the fundamental information-theoretic limits of policy learning with distributional outcomes. Our goal is to show that the dependence on the sample size $N$ and the policy complexity $V:=\mathrm{N\text{-}dim}(\Pi)$ in our upper bounds is essentially tight.
	
	We focus on a specific class of distributional utilities, the integrated quantile utility, defined as:
	\[
	U_\alpha(\nu):=\int_0^\alpha \nu^{-1}(t)\,dt,\qquad \alpha\in(0,1].
	\]
	This utility fits within our framework as it satisfies the Lipschitz property (Assumption~\ref{ass:lipschitz_U_main}) via the Cauchy-Schwarz inequality: $|U_\alpha(\nu_1)-U_\alpha(\nu_2)|\le \sqrt{\alpha}\,\mathcal{W}_2(\nu_1,\nu_2)$. By constructing a hard instance for this specific utility, we demonstrate the hardness of the general problem.
	
	\begin{theorem}[Minimax Lower Bound]\label{thm:lower_bound_main}
		Assume the policy class satisfies $\mathrm{N\text{-}dim}(\Pi)\ge 1$, let $d=|\mathcal A|\ge2$, fix $\alpha\in(0,1]$, and take $0<\underline f\le 1/d$.
		Let $q_-,q_+:[0,1]\to\mathcal I$ be two left-continuous, non-decreasing, $\mathcal U$-Lipschitz quantile curves satisfying $q_-(t)\le q_+(t)$ and $\sup_{t\in[0,1]}\max\{|q_-(t)|,|q_+(t)|\}\le M$.
		Define $\Delta_Q:=\int_0^\alpha (q_+(t)-q_-(t))\,dt$, and let $\mathcal{P}_{\text{lower}}(q_-,q_+)$ be the subclass of distributions satisfying Assumptions~\ref{ass:consistency}--\ref{ass:quantile_lipschitz} whose potential-outcome quantile curves take values in $\{q_-,q_+\}$ and whose behavior policy has overlap at least $\underline f$.
		There exists a universal constant $c_0>0$ such that for any learning algorithm that maps a dataset $\mathcal{D}_N$ to a policy $\hat\pi$, the worst-case regret is bounded from below by:
		\[
		\inf_{\hat\pi}\ \sup_{\mathbb P\in\mathcal P_{\text{lower}}(q_-,q_+)}\ 
		\mathbb E_{\mathcal{D}_N \sim \mathbb{P}^{\otimes N}}\!\left[\mathcal{R}(\hat{\pi})\right]
		\ \ge\
		c_0\,\Delta_Q
		\min\left\{1,\sqrt{\frac{\mathrm{N\text{-}dim}(\Pi)}{\underline f\,N}}\right\},
		\] 
		where \(\mathcal{R}(\hat{\pi}) = U_\alpha(\mu_{\mathbb{P}}(\pi_{\mathbb{P}}^{\ast}))-U_\alpha(\mu_{\mathbb{P}}(\hat{\pi}))\) is the regret.
	\end{theorem}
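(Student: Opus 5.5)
We construct a family of hard instances indexed by the hypercube $\{0,1\}^V$ with $V=\mathrm{N\text{-}dim}(\Pi)$, and then apply Assouad's lemma. The key simplification is that outcome curves only take the values $q_\pm$. In that case $U_\alpha$ is linear in the mixing weights. Indeed, if a fraction $p$ of the population receives curve $q_+$, Proposition~\ref{prop:barycenter_quantile_main} gives barycenter quantile $q_-+p(q_+-q_-)$, which is monotone and Lipschitz. Hence $U_\alpha(\mu(\pi))=\int_0^\alpha q_- + \Delta_Q\,\mathbb P(\mathcal Y[\pi(X)]^{-1}=q_+)$. The problem therefore reduces exactly to scalar binary-reward policy learning with reward gap scaled by $\Delta_Q$.

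\textbf{Construction.} Let $\{x_1,\dots,x_V\}$ be Natarajan-shattered, with witnesses $f_1,f_2$. Take $X$ uniform on these points. Set the behavior propensity to $\underline f$ on the two actions $f_1(x_j),f_2(x_j)$ and spread the remaining mass over the other actions. If $d=2$, use $\underline f\le 1/2$ and assign action probabilities $(\underline f,\,1-\underline f)$ appropriately; the lower bound only needs the informative actions to be observed with probability about $\underline f$. For $\sigma\in\{0,1\}^V$ and a parameter $\epsilon\in(0,1/2]$, let $\mathcal Y[f_1(x_j)]^{-1}=q_+$ with probability $\tfrac12+\epsilon(2\sigma_j-1)/2$ and $q_-$ otherwise. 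Let $\mathcal Y[f_2(x_j)]^{-1}$ do the reverse, and let all other actions deterministically yield $q_-$. Draw potential outcomes independently of $A$ given $X$. Each instance then satisfies Assumptions~\ref{ass:consistency}--\ref{ass:quantile_lipschitz} with quantile curves in $\{q_-,q_+\}$.

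\textbf{Regret separation.} By the linearity above, for any $\pi$ the regret under $\sigma$ is at least $\frac{\Delta_Q\epsilon}{V}\sum_j \mathbf 1\{\pi(x_j)\neq f_{\sigma_j}(x_j)\}$, where $f_{\sigma_j}$ means $f_1$ if $\sigma_j=1$ and $f_2$ otherwise. Shattering guarantees that the optimal policy $\pi^\star_\sigma$, which picks $f_{\sigma_j}(x_j)$ at every $j$, lies in $\Pi$. This gives Assouad's separation with per-coordinate loss $\Delta_Q\epsilon/V$.

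\textbf{Information bound.} Flipping one coordinate $\sigma_j$ changes the distribution of a single observation only on the event $\{X=x_j,\,A\in\{f_1(x_j),f_2(x_j)\}\}$, which has probability at most $2\underline f/V$ (or $1/V$ when $d=2$). On that event the outcome is a Bernoulli($\tfrac12\pm\epsilon/2$) mixture of $q_\pm$, whose KL divergence between the two signs is $O(\epsilon^2)$. The KL divergence for the $N$-sample product is therefore $O(N\underline f\epsilon^2/V)$. Assouad's lemma then yields
\[
\inf_{\hat\pi}\sup_\sigma \mathbb E\,\mathcal R \;\ge\; \frac{\Delta_Q\epsilon}{2}\Bigl(1-\sqrt{c\,N\underline f\epsilon^2/V}\Bigr).
\]
Choosing $\epsilon=\min\{1/2,\,c'\sqrt{V/(\underline f N)}\}$ gives the claimed bound $c_0\Delta_Q\min\{1,\sqrt{V/(\underline f N)}\}$.

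\textbf{Main obstacle.} The delicate step is getting the factor $\underline f$ in the right direction. The informative actions must be observed with probability about $\underline f$ rather than about $1/d$; otherwise the bound degrades to $\sqrt{V/N}$. This requires checking that the propensity assignment satisfies the overlap constraint $f_0\ge\underline f$ for all actions while keeping the informative ones at exactly $\underline f$, which is where $\underline f\le 1/d$ is used. It also requires verifying that the KL bound on the informative event is what controls Assouad's lemma.
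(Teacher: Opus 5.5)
Your route is the paper's: Assouad's lemma over a Natarajan-shattered set $\{x_1,\dots,x_V\}$, with every potential-outcome curve a Bernoulli mixture of $q_-$ and $q_+$. This makes $U_\alpha(\mu(\pi))$ affine in the mixing probabilities, so the regret separates coordinatewise with per-coordinate loss $\Delta_Q\epsilon/V$. For $d\ge 3$ your construction works as written. Putting mass $\underline f$ on $f_1(x_j)$ and $f_2(x_j)$, and $(1-2\underline f)/(d-2)\ge\underline f$ on each other action, is admissible precisely because $\underline f\le 1/d$. The informative event then has probability $2\underline f/V$, and the $N$-sample KL is $O(N\underline f\epsilon^2/V)$. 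Using Pinsker instead of the paper's $1-\mathrm{TV}\ge\frac12 e^{-\mathrm{KL}}$ makes no difference.

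The gap is the case $d=2$, which the theorem covers. There $\mathcal A=\{f_1(x_j),f_2(x_j)\}$, and in your construction both arms carry the $\sigma_j$-perturbation. For any propensities $(p,1-p)$ at $x_j$, the per-observation KL between $\sigma$ and its $j$-th neighbour is
\[
\frac1V\Bigl[p\,\mathrm{kl}\bigl(\tfrac12+\tfrac\epsilon2\,\big\|\,\tfrac12-\tfrac\epsilon2\bigr)+(1-p)\,\mathrm{kl}\bigl(\tfrac12-\tfrac\epsilon2\,\big\|\,\tfrac12+\tfrac\epsilon2\bigr)\Bigr]=\frac{\epsilon}{V}\log\frac{1+\epsilon}{1-\epsilon}\asymp\frac{\epsilon^2}{V},
\]
with no factor $\underline f$. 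Your own parenthetical ``or $1/V$ when $d=2$'' says as much, yet the next sentence still asserts $O(N\underline f\epsilon^2/V)$.

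So for $d=2$ your argument yields only $c\,\Delta_Q\min\{1,\sqrt{V/N}\}$. For $N\ge V/\underline f$ this is smaller than the claimed bound by the factor $\sqrt{\underline f}$, so no universal $c_0$ follows, and no choice of the $(\underline f,1-\underline f)$ assignment rescues it. Your stated principle, that the informative actions should be observed with probability about $\underline f$, is right. But a two-sided perturbation makes every action informative when $d=2$.

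The fix, as in the paper, is a one-sided perturbation:
\begin{itemize}
\item keep $\mathcal Y[f_1(x_j)]^{-1}$ a fixed $\mathrm{Bern}(1/2)$ mixture of $q_\pm$;
\item let only $\mathcal Y[f_2(x_j)]^{-1}$ use success probability $\tfrac12\pm\epsilon$, with the sign set by $\sigma_j$;
\item set $f_0(a\mid x_j)=\underline f$ for all $a\neq f_1(x_j)$ and $f_0(f_1(x_j)\mid x_j)=1-(d-1)\underline f\ge\underline f$.
\end{itemize}
Then, for every $d\ge 2$, only the event $\{X=x_j,\,A=f_2(x_j)\}$, of probability $\underline f/V$, distinguishes neighbours. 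The per-coordinate loss is still $\Delta_Q\epsilon/V$ for $\epsilon\le 1/4$, and the rest of your argument goes through unchanged.
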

	
	\paragraph{Proof sketch.}
	The detailed proof is provided in Appendix~\ref{app:proof_lower_bound}.
	We employ the method of Assouad's Lemma by constructing a hard instance family parameterized by the vertices of a hypercube $\mathcal V=\{\pm1\}^{\mathrm{N\text{-}dim}(\Pi)}$. By designing the potential outcomes as distribution-valued random mixtures of two valid base quantile curves $q_-$ and $q_+$, we reduce the utility maximization problem to a multiple hypothesis testing problem of identifying the optimal action configuration. We then bound the error probability via the Kullback-Leibler (KL) divergence between distributions induced by adjacent vertices. Since the difficulty of distinguishing the optimal actions under noisy outcomes scales inversely with the sample size $N$ and overlap $\underline{f}$, balancing the regret magnitude with this detection probability yields the leading minimax dependence $\sqrt{\mathrm{N\text{-}dim}(\Pi)/(\underline f N)}$ in the large-sample regime.
	
	\paragraph{Interpretation.}
	Theorem~\ref{thm:lower_bound_main} supports the rate sharpness of the proposed framework with respect to $N$ and policy-class complexity. In the large-sample regime, the lower bound scales as $\sqrt{\mathrm{N\text{-}dim}(\Pi)/(\underline f N)}$, which matches the leading $\mathcal{O}(N^{-1/2})$ dependence of our DR upper bound (Theorem~\ref{thm:statistical guarantee DR}) up to logarithmic factors. The bound clarifies the cost of the distributional structure: while it depends on the integrated gap $\Delta_Q$ (analogous to the scalar outcome range), it does not introduce a leading dependence on the non-parametric metric entropy of the outcome space in this one-dimensional construction.

	\section{Conclusion}
	We formulated offline policy learning when each potential outcome is a probability measure and policy performance is evaluated through a utility of the policy-induced Wasserstein barycenter. Focusing on one-dimensional outcomes, we leveraged the quantile representation of \(W_2\) to express the barycenter as a mean quantile curve, enabling empirical welfare maximization while preserving the optimal-transport geometry. We analyzed two estimators of the mean quantile curve, an IPW estimator with known propensities and a cross-fitted doubly robust estimator with estimated nuisance functions.
	Our main theoretical results provide finite-sample regret guarantees with leading dependence \(\widetilde{\mathcal O}(\sqrt{\mathrm{N\text{-}dim}(\Pi)/N})\) on the Natarajan dimension of the policy class and the sample size. We also established a minimax lower bound showing sharpness of the leading dependence on \((\mathrm{N\text{-}dim}(\Pi),N)\) for a representative one-dimensional construction.
	
	Several directions remain open. This work mainly focuses on one-dimensional distributional outcomes, where the quantile isometry provides an exact representation of Wasserstein geometry and makes finite-sample policy regret analysis tractable. Extending the analysis to multivariate Wasserstein spaces is substantially more challenging, since there is no canonical quantile ordering and Wasserstein barycenters generally lack the explicit structure used in our proofs. Another important direction is to relax the quantile regularity conditions imposed here, for example by allowing weaker smoothness or tail behavior. Finally, stochastic policies, continuous treatments, and sequential decision problems involve different policy classes and complexity measures, and therefore require separate regret analyses. We leave these extensions for future work.
	
	\section*{Acknowledgments}
	Yiyan Huang was supported by the Startup Funds of Great Bay University (No. YJKY250111) and the Innovative Team Program for Regular Universities in Guangdong Province (No. 2025KCXTD031). Qi Wu was supported by the CityU-JD Digits Joint Laboratory in Financial Technology and Engineering, the Hong Kong Research Grants Council General Research Fund (Nos. 11219420/9043008 and 11200219/9042900), the HK Institute of Data Science, the InnoHK initiative of the Government of the HKSAR, and the Laboratory for AI-Powered Financial Technologies. Zhiheng Zhang was supported by the Fundamental Research Funds for the Central Universities (No. 2025110602), the Independent Research Project funded by the School of Statistics and Data Science (No. 2026110081), and the Shanghai Engineering Research Center of Finance Intelligence (No. 19DZ2254600).
	\bibliography{reference.bib}
	\appendix
	
	\clearpage

\section{Wasserstein barycenter representation}\label{app:barycenter}
\begin{proof}[Proof of Proposition~\ref{prop:barycenter_quantile_main}]
	Let $\mathcal{Q}$ be the set of all left-continuous, non-decreasing functions $q:[0,1] \to \mathbb{R}$. 
	In one dimension, the map $\nu \mapsto \nu^{-1}$ is an isometry between $(\mathcal P_2(\mathbb{R}), \mathcal{W}_2)$ and $(\mathcal{Q}, \|\cdot\|_{L_2})$ (see Proposition~\ref{prop:Quantile representation of Wasserstein}).
	
	Fix a policy $\pi$ and denote the random quantile curve as $Q(t) := \mathcal{Y}[\pi(X)]^{-1}(t)$. 
	By Proposition~\ref{prop:Quantile representation of Wasserstein}, for any measure $\nu \in \mathcal P_2(\mathbb{R})$ with quantile function $q_\nu := \nu^{-1} \in \mathcal{Q}$, we have:
	\[
	\mathbb{E}\big[\mathcal{W}_2^2(\mathcal{Y}[\pi(X)], \nu)\big] 
	= \mathbb{E} \left[ \int_0^1 |Q(t) - q_\nu(t)|^2 \, dt \right] 
	= \int_0^1 \mathbb{E} \big[ |Q(t) - q_\nu(t)|^2 \big] \, dt.
	\]
	For each fixed $t \in [0,1]$, we expand the integrand:
	\[
	\mathbb{E}\big[|Q(t) - q_\nu(t)|^2\big] 
	= \mathbb{E}\big[|Q(t) - \mathbb{E}[Q(t)]|^2\big] + |\mathbb{E}[Q(t)] - q_\nu(t)|^2,
	\]
	where the cross term vanishes since $\mathbb{E}[Q(t) - \mathbb{E}[Q(t)]] = 0$. 
	Consequently,
	\[
	\mathbb{E}\big[\mathcal{W}_2^2(\mathcal{Y}[\pi(X)], \nu)\big] 
	= C_\pi + \int_0^1 |\mathbb{E}[Q(t)] - q_\nu(t)|^2 \, dt,
	\]
	where $C_\pi := \int_0^1 \mathbb{E}[|Q(t) - \mathbb{E}[Q(t)]|^2] \, dt$ is a constant independent of $\nu$.
	
	It follows that any minimizer must satisfy $q_\nu(t) = \mathbb{E}[Q(t)]$ for almost every $t$. 
	By Assumption~\ref{ass:boundedness}, $Q$ is uniformly bounded by $M$; hence, the map $t \mapsto \mathbb{E}[Q(t)]$ is well-defined and belongs to $L_2([0,1])$. 
	Furthermore, since $Q(\cdot)$ is non-decreasing and left-continuous almost surely, the pointwise expectation $\mathbb{E}[Q(\cdot)]$ is also non-decreasing. 
	Left-continuity of $\mathbb{E}[Q(\cdot)]$ follows from the dominated convergence: for any $t \in (0,1]$, letting $s \uparrow t$, we have $\mathbb{E}[Q(s)] \to \mathbb{E}[Q(t)]$ because $|Q(s)| \le M$ and $Q(s) \to Q(t)$ a.s.
	
	Thus, $\mathbb{E}[Q] \in \mathcal{Q}$, ensuring that a minimizer exists within $\mathcal{Q}$. 
	Uniqueness follows from the strict convexity of the functional $q \mapsto \int_0^1 |\mathbb{E}[Q(t)] - q(t)|^2 \, dt$ on $\mathcal{Q}$. 
	Therefore, the unique barycenter $\mu(\pi)$ satisfies:
	\[
	(\mu(\pi))^{-1}(t) = \mathbb{E}[\mathcal{Y}[\pi(X)]^{-1}(t)], \qquad \forall t \in [0,1].
	\]
\end{proof}

\section{Proof of Theorem~\ref{thm:ipw_main}}\label{app:proof_ipw}
Before presenting the proof of Theorem \ref{thm:ipw_main}, we first give two lemmas that are useful for our proofs.

\begin{lemma}\label{lemma:sup all bound}
Given \(\delta>0\), \(\mathcal{M}>0\), \(\epsilon\geq 0\), \(L\geq 0\), \(\beta\in(0,1]\), and a probability space \((\Omega,\mathcal{F},\mathbb{P})\). Suppose that \(f(t;\omega):\mathcal{T}\times\Omega\rightarrow\mathbb{R}\) with \(\mathcal{T}=[t_{\mathrm{Low}},t_{\mathrm{Upp}}]\) and
\begin{equation*}
\begin{aligned}
|f(t;\omega)-f(s;\omega)|\leq L|t-s|^{\beta}+\epsilon\quad\forall\;\omega\in\Omega.
\end{aligned}
\end{equation*}
Let \(\mathcal{T}_{\mathrm{par}}:=\{t_{\mathrm{Low}}=t_{0}<t_{1}<\cdots<t_{\mathcal{J}}=t_{\mathrm{Upp}}\}\) be a partition of \([t_{\mathrm{Low}},t_{\mathrm{Upp}}]\) with \(t_{j}=t_{\mathrm{Low}}+\frac{j(t_{\mathrm{Upp}}-t_{\mathrm{Low}})}{\mathcal{J}}\) and
\begin{equation*}
\begin{aligned}
\mathbb{P}\left\{f(t;\omega)\geq \mathcal{M}-\epsilon-L\left|\frac{t_{\mathrm{Upp}}-t_{\mathrm{Low}}}{\mathcal{J}}\right|^{\beta}\right\}\leq \frac{\delta}{\mathcal{J}},
\end{aligned}
\end{equation*}
for any \(t\in\mathcal{T}_{\mathrm{par}}\), then we have
\begin{equation*}
\begin{aligned}
\mathbb{P}\{\underset{t\in\mathcal{T}}{\sup}\;f(t;\omega)\geq \mathcal{M}\}\leq \delta.
\end{aligned}
\end{equation*}
\end{lemma}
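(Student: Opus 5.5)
The plan is a deterministic covering step followed by a union bound over the grid. Write $h:=(t_{\mathrm{Upp}}-t_{\mathrm{Low}})/\mathcal J$ for the mesh. Define the events $E_j:=\{\omega: f(t_j;\omega)\ge \mathcal M-\epsilon-Lh^{\beta}\}$ for $j=1,\dots,\mathcal J$. I will show the inclusion
\[
\{\omega:\sup_{t\in\mathcal T} f(t;\omega)\ge \mathcal M\}\subseteq \bigcup_{j=1}^{\mathcal J} E_j .
\]
Given this, the hypothesis $\mathbb P(E_j)\le \delta/\mathcal J$ and the union bound give $\mathbb P\{\sup_t f\ge \mathcal M\}\le \mathcal J\cdot\delta/\mathcal J=\delta$. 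If the supremum event is not known to be measurable, the same bound holds for its outer probability, which suffices.

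The inclusion rests on a covering fact. Every $t\in\mathcal T$ lies in some cell $[t_{j-1},t_j]$ with $1\le j\le\mathcal J$, so $|t-t_j|\le h$. Only the $\mathcal J$ right endpoints $t_1,\dots,t_{\mathcal J}$ are needed, not all $\mathcal J+1$ grid points; this is what makes the count come out to exactly $\delta$ rather than $(\mathcal J+1)\delta/\mathcal J$. For such $t$ and $j$, the assumed modulus of continuity, which holds for every $\omega$, gives
\[
f(t_j;\omega)\ge f(t;\omega)-L|t-t_j|^{\beta}-\epsilon\ge f(t;\omega)-Lh^{\beta}-\epsilon .
\]

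I expect the main obstacle to be that the supremum need not be attained, so $\sup_t f(t;\omega)\ge\mathcal M$ does not immediately produce a point $t$ with $f(t;\omega)\ge \mathcal M$. I will handle this pointwise in $\omega$. Fix $\omega$ in the supremum event. For each $k\in\mathbb N$, choose $s_k\in\mathcal T$ with $f(s_k;\omega)>\mathcal M-1/k$. Then choose $j_k\in\{1,\dots,\mathcal J\}$ with $|s_k-t_{j_k}|\le h$, so that
\[
f(t_{j_k};\omega)>\mathcal M-1/k-\epsilon-Lh^{\beta}.
\]
Since $j_k$ takes only finitely many values, some index $j^\ast$ equals $j_k$ for infinitely many $k$. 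Letting $k\to\infty$ along those $k$ gives $f(t_{j^\ast};\omega)\ge \mathcal M-\epsilon-Lh^{\beta}$, so $\omega\in E_{j^\ast}$. This establishes the inclusion and, with the union bound above, completes the proof.
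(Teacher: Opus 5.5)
Your proposal is correct and follows essentially the same route as the paper: a deterministic covering inequality on each $\omega$, then a union bound over $\mathcal J$ grid points. Your pigeonhole/subsequence step plays the role of the paper's $\eta$-approximation of the supremum, which gives $\sup_t f\le \max_j f(t_j)+\epsilon+Lh^{\beta}$ directly; your explicit restriction to the right endpoints $t_1,\dots,t_{\mathcal J}$ is slightly more careful than the paper's write-up, which passes through $\max_{t\in\mathcal T_{\mathrm{par}}}$ (including $t_0$) but then sums only over $j=1,\dots,\mathcal J$.
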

\begin{proof}
For any \(\eta>0\), there exists \(\tilde{t}\) such that 
\begin{equation*}
\begin{aligned}
\underset{t\in\mathcal{T}}{\sup}\;f(t;\omega)-\eta<f(\tilde{t};\omega)\leq \underset{t\in\mathcal{T}}{\sup}\;f(t;\omega).
\end{aligned}
\end{equation*}
Now, choose \(t_{j}\in\mathcal{T}_{\mathrm{par}}\) such that \(t_{j-1}<\tilde{t}\leq t_{j}\). Then we have
\begin{equation*}
\begin{aligned}
\underset{t\in\mathcal{T}}{\sup}\;f(t;\omega)&=\underset{t\in\mathcal{T}}{\sup}\;f(t;\omega)-f(\tilde{t};\omega)-\eta+\eta+f(\tilde{t};\omega)\\
&\leq f(\tilde{t};\omega)+\eta=f(\tilde{t};\omega)-f(t_{j};\omega)+f(t_{j};\omega)+\eta\leq |f(\tilde{t};\omega)-f(t_{j};\omega)|+\underset{t\in\mathcal{T}_{\mathrm{par}}}{\max}\;f(t;\omega)+\eta\\
&\leq \epsilon+L|\tilde{t}-t_{j}|^{\beta}+\underset{t\in\mathcal{T}_{\mathrm{par}}}{\max}\;f(t;\omega)+\eta\leq \epsilon+L|t_{j}-t_{j-1}|^{\beta}+\underset{t\in\mathcal{T}_{\mathrm{par}}}{\max}\;f(t;\omega)+\eta\\
&=\epsilon+L\left|\frac{t_{\mathrm{Upp}}-t_{\mathrm{Low}}}{\mathcal{J}}\right|^{\beta}+\underset{t\in\mathcal{T}_{\mathrm{par}}}{\max}\;f(t;\omega)+\eta.
\end{aligned}
\end{equation*}
Since \(\eta\) is arbitrary, we must have \(\underset{t\in\mathcal{T}}{\sup}\;f(t;\omega)\leq \epsilon+L\left|\frac{t_{\mathrm{Upp}}-t_{\mathrm{Low}}}{\mathcal{J}}\right|^{\beta}+\underset{t\in\mathcal{T}_{\mathrm{par}}}{\max}\;f(t;\omega)\). As a result, we have
\begin{align*}
&\mathbb{P}\{\underset{t\in\mathcal{T}}{\sup}\;f(t;\omega)\geq \mathcal{M}\}\leq\mathbb{P}\left\{\underset{t\in\mathcal{T}_{\mathrm{par}}}{\max}\;f(t;\omega)\geq \mathcal{M}-\epsilon-L\left|\frac{t_{\mathrm{Upp}}-t_{\mathrm{Low}}}{\mathcal{J}}\right|^{\beta}\right\}\\
&\leq\underset{j=1}{\overset{\mathcal{J}}{\sum}}\;\mathbb{P}\left\{f(t_{j};\omega)\geq \mathcal{M}-\epsilon-L\left|\frac{t_{\mathrm{Upp}}-t_{\mathrm{Low}}}{\mathcal{J}}\right|^{\beta}\right\}\leq\delta.
\end{align*}
\end{proof}
\begin{lemma}[Multi-class Sauer bound \cite{haussler1995generalization}]\label{lem:sauer-multi-class}
	Let $\Pi$ be a class of functions mapping $\mathcal X$ to a $d$-element set $\mathcal A$.
	Let $V:=\mathrm{N\text{-}dim}(\Pi)$.
	For any sample $x_{1:n}$ with $n\ge V$, the number of distinct labelings
	\[
	m_\Pi(n):=\left|\left\{(\pi(x_1),\dots,\pi(x_n)):\ \pi\in\Pi\right\}\right|
	\]
	satisfies
	\[
	m_\Pi(n)\le (e n d)^V,
	\qquad\text{and hence}\qquad
	\log m_\Pi(n)\le V\log(e n d).
	\]
\end{lemma}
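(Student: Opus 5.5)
I would prove this through the Haussler--Long generalization of Sauer's lemma. The counting bound is then reduced to an elementary binomial estimate. Throughout, fix the sample $x_{1:n}$ and replace $\Pi$ by its restriction $F:=\{(\pi(x_1),\dots,\pi(x_n)):\pi\in\Pi\}\subseteq\mathcal A^n$. Natarajan-shattering is a property of the restricted patterns only, so $\mathrm{N\text{-}dim}(F)\le V$, and it suffices to bound $|F|$.

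\textbf{Step 1 (combinatorial core).} I would show
\[
|F|\le \sum_{i=0}^{V}\binom{n}{i}\binom{d}{2}^{i}
\]
by induction on $n$, in the style of Pajor/Sauer shifting. Fix a pair of labels $\{b,c\}\subseteq\mathcal A$. Project $F$ onto the first $n-1$ coordinates. A projected pattern that extends with two different last labels $b\neq c$ is charged to a class $F_{\{b,c\}}$. Any set Natarajan-shattered by $F_{\{b,c\}}$ can be enlarged by the point $x_n$, with $f_1(x_n)=b$ and $f_2(x_n)=c$. Hence each $F_{\{b,c\}}$ has Natarajan dimension at most $V-1$, while the projection itself has dimension at most $V$. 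There are $\binom d2$ such pairs. The recursion
\[
|F|\le m(n-1,V)+\binom d2\, m(n-1,V-1)
\]
then closes the induction with the displayed sum. The main care is in the charging argument: a pattern with several possible last labels is charged once per extra label by fixing a reference label, which is exactly what yields the factor $\binom d2$ rather than $d$.

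\textbf{Step 2 (estimate).} For $n\ge V$ the standard bound $\sum_{i\le V}\binom ni\le (en/V)^V$ gives
\[
|F|\le \binom d2^{V}\Big(\frac{en}{V}\Big)^V\le\Big(\frac{e\,n\,d^2}{2V}\Big)^V .
\]
Taking logarithms then yields the stated $\log m_\Pi(n)\le V\log(end)$, provided $d\le 2V$. More generally the argument gives $V\log(end^2)$, which changes only constants in Theorem~\ref{thm:ipw_main} up to a factor $2$ inside the logarithm.

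\textbf{Main obstacle.} The expected obstacle is the exact form $(end)^V$, linear in $d$, when $d\gg V$. I would first try to sharpen Step~1 by charging to ordered ``reference'' labels to get $(d-1)^i$ in place of $\binom d2^i$. However, I expect this to fail. Take $V=1$ and $n=2$, and let $F$ be the edge set of a $C_4$-free bipartite graph on $\mathcal A\times\mathcal A$. Then $F$ has no Natarajan-shattered pair, yet it can have order $d^{3/2}$ patterns, which exceeds $2ed$ for large $d$. So I would state the lemma in the provable form $m_\Pi(n)\le (end^2)^V$, or as $(end)^V$ under $d\le 2V$, citing \cite{haussler1995generalization} for the sum bound. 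I would then carry the harmless $\log d^2$ factor into the downstream regret bounds.
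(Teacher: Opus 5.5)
Your proposal is correct, and the discrepancy it uncovers lies in the statement, not in your argument. The paper does not actually prove the lemma. Its ``proof'' only asserts that $(end)^V$ is a standard consequence of \cite{haussler1995generalization}. That reference yields the sum in your Step~1, with a factor $\binom{d}{2}$ (not $d$) per unit of Natarajan dimension, and your example shows that the linear-in-$d$ form cannot hold in general.

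Your reduction is exact. The set $\{x_1,x_2\}$ is Natarajan-shattered by $F\subseteq\mathcal A^2$ if and only if $F$ contains $(a,b),(a,b'),(a',b),(a',b')$ with $a\neq a'$ and $b\neq b'$, i.e.\ a $4$-cycle in the bipartite graph with edge set $F$. A concrete instance is as follows:
\begin{itemize}
\item take $d=31$, identify the labels at $x_1$ with the points and the labels at $x_2$ with the lines of the projective plane of order $5$;
\item let $\Pi=F$ be the incidence relation on $\mathcal X=\{x_1,x_2\}$;
\item then $\mathrm{N\text{-}dim}(\Pi)=1$ and $n=2$, but $m_\Pi(2)=31\cdot 6=186>2e\cdot 31\approx 168.5$.
\end{itemize}
So the lemma as written is false, and no proof of it can exist.

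Your Steps~1--2 are the standard Haussler--Long argument and are correct (for $d\ge 2$; the case $d=1$ is trivial). The reference-label charging gives $|F|-|F'|\le\sum_{\{b,c\}}|F_{\{b,c\}}|$, each $F_{\{b,c\}}$ has dimension at most $V-1$, and then $\sum_{i\le V}\binom{n}{i}\le(en/V)^V$ finishes the estimate. This gives the right replacement: $m_\Pi(n)\le(end^2/(2V))^V$. Hence $(end)^V$ holds when $d\le 2V$, and in all cases $\log m_\Pi(n)\le V\log(end^2)\le 2V\log(end)$.

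The lemma is used downstream only through this logarithm, so the damage is a constant factor. The term $\sqrt{2\,\mathrm{N\text{-}dim}(\Pi)\log(eNd)/N}$ in Theorem~\ref{thm:ipw_main} becomes $\sqrt{4\,\mathrm{N\text{-}dim}(\Pi)\log(eNd)/N}$, and likewise with $N/L$ in Theorem~\ref{thm:statistical guarantee DR}. The $\widetilde{\mathcal O}(\sqrt{\mathrm{N\text{-}dim}(\Pi)/N})$ conclusions therefore stand once the lemma is restated in your form.
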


\begin{proof}
	This is a standard consequence of the multi-class Sauer-type lemma for the Natarajan dimension.
	We include the (weak but sufficient) form used in the main text: $m_\Pi(n)$ grows at most polynomially in $n$ with exponent $V$
	and an additional factor $d$ for the label set.
\end{proof}
\noindent Now we will prove Theorem~\ref{thm:ipw_main} in the following.
\begin{proof}[Proof of Theorem~\ref{thm:ipw_main}]
	Recall $q_\pi(t)=\mu(\pi)^{-1}(t)=\mathbb E[\mathcal Y[\pi(X)]^{-1}(t)]$.
	By unconfoundedness and overlap, for each fixed $(\pi,t)$,
	\[
	q_\pi(t)=\mathbb E\!\left[\frac{\mathbf 1_{\{A=\pi(X)\}}\mathcal Y^{-1}(t)}{f_0(A|X)}\right].
	\]
	Also recall $\hat q_\pi^{\mathrm{IPW}}(t)=\mathbb P_N\left[\mathbf 1_{\{A=\pi(X)\}}\mathcal Y^{-1}(t)/f_0(A|X)\right]$.
	
	First, the regret can be reduced to uniform estimation error.
	Let $\hat\pi^{\mathrm{IPW}}\in\arg\max_{\pi\in\Pi}U(\hat\mu^{\mathrm{IPW}}(\pi))$.
	By the argmax property and Lipschitzness of $U$, we have
	\begin{equation*}
		\begin{aligned}
			\mathcal{R}(\hat{\pi}^{\mathrm{IPW}}) 
			&= U(\mu(\pi^\star)) - U(\mu(\hat{\pi}^{\mathrm{IPW}})) \\
			&= \Big( U(\mu(\pi^\star)) - U(\hat{\mu}^{\mathrm{IPW}}(\pi^\star)) \Big) + \Big( U(\hat{\mu}^{\mathrm{IPW}}(\pi^\star)) - U(\hat{\mu}^{\mathrm{IPW}}(\hat{\pi}^{\mathrm{IPW}})) \Big) \\
			&\quad + \Big( U(\hat{\mu}^{\mathrm{IPW}}(\hat{\pi}^{\mathrm{IPW}})) - U(\mu(\hat{\pi}^{\mathrm{IPW}})) \Big) \\
			&\le \Big( U(\mu(\pi^\star)) - U(\hat{\mu}^{\mathrm{IPW}}(\pi^\star)) \Big) + 0 + \Big( U(\hat{\mu}^{\mathrm{IPW}}(\hat{\pi}^{\mathrm{IPW}})) - U(\mu(\hat{\pi}^{\mathrm{IPW}})) \Big) \\
			&\le \sup_{\pi \in \Pi} \big| U(\mu(\pi)) - U(\hat{\mu}^{\mathrm{IPW}}(\pi)) \big| + \sup_{\pi \in \Pi} \big| U(\hat{\mu}^{\mathrm{IPW}}(\pi)) - U(\mu(\pi)) \big| \\
			&= 2 \sup_{\pi \in \Pi} \big| U(\hat{\mu}^{\mathrm{IPW}}(\pi)) - U(\mu(\pi)) \big| \\
			&\le 2 L_U \sup_{\pi \in \Pi} \mathcal{W}_2(\hat{\mu}^{\mathrm{IPW}}(\pi), \mu(\pi)),
		\end{aligned}
	\end{equation*}
	To handle the monotonicity constraint, we use the fact that $q_\pi \in \mathcal Q$. Since $\Pi_{\mathcal Q}$ is a non-expansive contraction in $L_2$, we have:
	\[
	\mathcal W_2(\hat\mu^{\mathrm{IPW}}(\pi),\mu(\pi)) = \|\hat q_\pi^{\mathrm{IPW},\uparrow} - q_\pi\|_{L_2} \le \|\hat q_\pi^{\mathrm{IPW}} - q_\pi\|_{L_2} \le \sup_{t\in[0,1]}\big|\hat q_\pi^{\mathrm{IPW}}(t)-q_\pi(t)\big|.
	\]
	Therefore, it follows that
	\[
	\mathcal R(\hat\pi^{\mathrm{IPW}}) \le 2L_U\sup_{t\in[0,1]}\sup_{\pi\in\Pi}\big|\hat q_\pi^{\mathrm{IPW}}(t)-q_\pi(t)\big|.
	\]
	Define
	\[
	\Delta(t):=\sup_{\pi\in\Pi}\left|(\mathbb P_N-\mathbb P)\left[\frac{\mathbf 1_{\{A=\pi(X)\}}\mathcal Y^{-1}(t)}{f_0(A|X)}\right]\right|.
	\]
	It suffices to bound $\sup_{t\in[0,1]}\Delta(t)$.
	
	For any fixed $t$, conditional on $X_{1:N}$, the set of distinct labelings
	$\{(\pi(X_1),\dots,\pi(X_N)):\pi\in\Pi\}$ has cardinality $m_\Pi(N)$.
	For a fixed labeling, the summands are i.i.d.\ and bounded by
	$|\mathcal Y^{-1}(t)|/f_0(A|X)\le M/\underline f$.
	Thus Hoeffding's inequality and a union bound over labelings yield
	\[
	\mathbb P\{\Delta(t)\ge \varepsilon\}\le 2\,m_\Pi(N)\exp\!\left(-\frac{N\varepsilon^2}{2(M/\underline f)^2}\right).
	\]
	
	Let $\mathcal T_{\mathrm{par}}=\{t_j:j=0,\dots,\mathcal J\}$ be the uniform grid with $\mathcal J+1$ points (mesh $\eta=1/\mathcal J$).
	A union bound over $t_j$ yields: with probability at least $1-\delta$,
	\[
	\max_{0\le j\le \mathcal J}\Delta(t_j)
	\le
	\frac{M}{\underline f}\sqrt{\frac{2\log\left(\frac{2(\mathcal J+1)m_\Pi(N)}{\delta}\right)}{N}}.
	\]
	
	Now we will extend from grid to continuum by Lipschitz regularity.
	By Assumption~\ref{ass:quantile_lipschitz}, $\Delta(\cdot)$ is pathwise Lipschitz with constant $\mathcal U/\underline f$ for both $\mathbb{P}$ and $\mathbb{P}_N$ terms. For any $t$, let $t_j$ be the nearest grid point such that $|t-t_j| \le \eta$. Then:
	\[
	\Delta(t) \le \Delta(t_j) + \sup_{\pi}\left| \left(\frac{\mathcal{Y}^{-1}(t)}{f_0} - \frac{\mathcal{Y}^{-1}(t_j)}{f_0}\right) - \mathbb{E}\left[\frac{\mathcal{Y}^{-1}(t)}{f_0} - \frac{\mathcal{Y}^{-1}(t_j)}{f_0}\right] \right| \le \Delta(t_j) + \frac{2\mathcal{U}}{\underline{f}}\eta.
	\]
	Taking the supremum over $t \in [0,1]$, we obtain $\sup_t \Delta(t) \le \max_j \Delta(t_j) + (2\mathcal U/\underline f)\eta$.
	
	Finally, we convert $m_\Pi(N)$ to Natarajan dimension. Recall $V:=\mathrm{N\text{-}dim}(\Pi)$.
	By Lemma~\ref{lem:sauer-multi-class}, $\log m_\Pi(N)\le V\log(eNd)$.
	Using $\sqrt{a+b}\le \sqrt a+\sqrt b$, we obtain
	\[
	\max_{j}\Delta(t_j)
	\le
	\frac{M}{\underline f}\left(
	\sqrt{\frac{2V\log(eNd)}{N}}
	+
	\sqrt{\frac{2\log\left(\frac{2(\mathcal J+1)}{\delta}\right)}{N}}\right).
	\]
	Combining the above bounds yields the theorem statement.
\end{proof}

\section{Proof of Theorem~\ref{thm:statistical guarantee DR}}\label{app:proof_DR}
\begin{proof}[Proof of Theorem \ref{thm:statistical guarantee DR}]

	Recall $q_\pi(t)=\mu(\pi)^{-1}(t)=\mathbb E[\mathcal Y[\pi(X)]^{-1}(t)]$.
	By unconfoundedness,
	\[
	q_\pi(t)=\mathbb E[m_0(\pi(X),X)(t)],
	\qquad
	m_0(a,x)(t):=\mathbb E[\mathcal Y^{-1}(t)\mid A=a,X=x].
	\]
	For notational convenience, define the reciprocal weights by
	\[
	g_0(a,x):=\frac{1}{f_0(a|x)},
	\qquad
	g(a,x):=g_0(a,x),
	\qquad
	\hat g_0^\ell(a,x):=\frac{1}{\hat f_0^\ell(a|x)}.
	\]
	When $\hat f_0^\ell$ appears in a denominator, it is maintained under the bounded-away-from-zero convention $\hat f_0^\ell(a|x)\ge\underline f$.
	Throughout the proof, we condition on the training folds $\mathcal I_{-\ell}$ used to construct
	$\hat f_0^\ell$ and $\hat m_0^\ell$; hence the nuisance estimates are fixed when taking expectations over the held-out fold $\mathcal I_\ell$.

	As in scalar policy learning, the regret can be decomposed as
\begin{equation*}
	\begin{aligned}
		\mathcal{R}(\hat{\pi}^{\mathrm{DR}}) 
		&= U(\mu(\pi^\star)) - U(\mu(\hat{\pi}^{\mathrm{DR}})) \\
		&= \Big( U(\mu(\pi^\star)) - U(\hat{\mu}^{\mathrm{DR}}(\pi^\star)) \Big) + \Big( U(\hat{\mu}^{\mathrm{DR}}(\pi^\star)) - U(\hat{\mu}^{\mathrm{DR}}(\hat{\pi}^{\mathrm{DR}})) \Big) \\
		&\quad + \Big( U(\hat{\mu}^{\mathrm{DR}}(\hat{\pi}^{\mathrm{DR}})) - U(\mu(\hat{\pi}^{\mathrm{DR}})) \Big) \\
		&\le \Big( U(\mu(\pi^\star)) - U(\hat{\mu}^{\mathrm{DR}}(\pi^\star)) \Big) + 0 + \Big( U(\hat{\mu}^{\mathrm{DR}}(\hat{\pi}^{\mathrm{DR}})) - U(\mu(\hat{\pi}^{\mathrm{DR}})) \Big) \\
		&\le \sup_{\pi \in \Pi} \big| U(\mu(\pi)) - U(\hat{\mu}^{\mathrm{DR}}(\pi)) \big| + \sup_{\pi \in \Pi} \big| U(\hat{\mu}^{\mathrm{DR}}(\pi)) - U(\mu(\pi)) \big| \\
		&= 2 \sup_{\pi \in \Pi} \big| U(\hat{\mu}^{\mathrm{DR}}(\pi)) - U(\mu(\pi)) \big| \\
		&\le 2 L_U \sup_{\pi \in \Pi} \mathcal{W}_2(\hat{\mu}^{\mathrm{DR}}(\pi), \mu(\pi)).
	\end{aligned}
\end{equation*}

Using the $L_2$ representation of the Wasserstein distance and the non-expansiveness of the isotonic projection $\Pi_{\mathcal{Q}}$, we observe that for any fixed $\pi \in \Pi$:
\begin{equation*}
	\mathcal{W}_2(\hat{\mu}^{\mathrm{DR}}(\pi), \mu(\pi)) = \|\hat{q}_\pi^{\mathrm{DR},\uparrow} - q_\pi\|_{L_2} \le \|\hat{q}_\pi^{\mathrm{DR}} - q_\pi\|_{L_2} = \left( \int_0^1 \big| \hat{q}_\pi^{\mathrm{DR}}(t) - q_\pi(t) \big|^2 dt \right)^{1/2}.
\end{equation*}
To obtain a uniform bound over $\Pi$, we note that for each $t \in [0,1]$ and any $\pi \in \Pi$:
\begin{equation*}
	\big| \hat{q}_\pi^{\mathrm{DR}}(t) - q_\pi(t) \big| \le \sup_{\pi' \in \Pi} \big| \hat{q}_{\pi'}^{\mathrm{DR}}(t) - q_{\pi'}(t) \big|.
\end{equation*}
Squaring both sides and integrating over $t \in [0,1]$, we obtain:
\begin{equation*}
	\int_0^1 \big| \hat{q}_\pi^{\mathrm{DR}}(t) - q_\pi(t) \big|^2 dt \le \int_0^1 \left[ \sup_{\pi' \in \Pi} \big| \hat{q}_{\pi'}^{\mathrm{DR}}(t) - q_{\pi'}(t) \big| \right]^2 dt.
\end{equation*}
Since the right-hand side is independent of $\pi$, we can take the supremum over $\pi \in \Pi$ on the left-hand side and take the square root:
\begin{equation*}
	\sup_{\pi \in \Pi} \left( \int_0^1 \big| \hat{q}_\pi^{\mathrm{DR}}(t) - q_\pi(t) \big|^2 dt \right)^{1/2} \le \left( \int_0^1 \left[ \sup_{\pi \in \Pi} \big| \hat{q}_\pi^{\mathrm{DR}}(t) - q_\pi(t) \big| \right]^2 dt \right)^{1/2}.
\end{equation*}
Consequently, define
\[
D(t):=\sup_{\pi\in\Pi}\big|\hat q_\pi^{\mathrm{DR}}(t)-q_\pi(t)\big|.
\]
Then the preceding display gives
\begin{equation*}
	\mathcal{R}(\hat{\pi}^{\mathrm{DR}}) \le 2 L_U \left( \int_0^1 D(t)^2\,dt \right)^{1/2}.
\end{equation*}

		Therefore, now it suffices to bound \(D(t)\), which can be split into three main steps:
		\paragraph{Step I: Decomposing \(D(t)\).}
		Note that
        {\small
			\begin{align*}
				&D(t)\\
				&=\underset{\pi\in\Pi}{\sup}\;\left|q_\pi(t)-\hat q_\pi^{\mathrm{DR}}(t)\right|\\
				&=\underset{\pi\in\Pi}{\sup}\;\left|
				\mathbb{E}\left[\mathcal{Y}[\pi(X)]^{-1}(t)\right]-\frac{1}{L}\underset{\ell=1}{\overset{L}{\sum}}\;\mathbb{P}_{n,\ell}\left[\hat{m}_{0}^{\ell}(\pi(X),X)+\frac{\mathbf{1}_{\{A=\pi(X)\}}}{\hat f_{0}^{\ell}(A|X)}(\mathcal{Y}^{-1}-\hat{m}_{0}^{\ell}(\pi(X),X))\right](t)
				\right|.
			\end{align*}
        }\noindent

		Denote
	\(\hat{g}_{0}^{\ell}(\cdot,\cdot)=\frac{1}{\hat f_{0}^{\ell}(\cdot|\cdot)}\). We then have
		{\small
			\begin{align}
				&D(t)\nonumber\\
				&=\underset{\pi\in\Pi}{\sup}\;\left|
				\mathbb{E}\left[\mathcal{Y}[\pi(X)]^{-1}(t)\right]-\frac{1}{L}\underset{\ell=1}{\overset{L}{\sum}}\;\mathbb{P}_{n,\ell}\left[\hat{m}_{0}^{\ell}(\pi(X),X)+\frac{\mathbf{1}_{\{A=\pi(X)\}}}{\hat f_{0}^{\ell}(A|X)}(\mathcal{Y}^{-1}-\hat{m}_{0}^{\ell}(\pi(X),X))\right](t)
				\right|\nonumber\\
				&\leq \underset{\pi\in\Pi}{\sup}\left|
				\frac{1}{L}\;\underset{\ell=1}{\overset{L}{\sum}}\;\mathbb{P}_{n,\ell}\left\{
				\begin{aligned}
					&\hat{m}_{0}^{\ell}(\pi(X),X)+\frac{\mathbf{1}_{\{A=\pi(X)\}}}{\hat f_{0}^{\ell}(A|X)}(\mathcal{Y}^{-1}-\hat{m}_{0}^{\ell}(\pi(X),X))\\
					&-m_{0}(\pi(X),X)-\frac{\mathbf{1}_{\{A=\pi(X)\}}}{f_{0}(A|X)}(\mathcal{Y}^{-1}-m_{0}(\pi(X),X))
				\end{aligned}
				\right\}(t)
				\right|\nonumber\\
				&\quad+\underset{\pi\in\Pi}{\sup}\left|\frac{1}{L}\;\underset{\ell=1}{\overset{L}{\sum}}\;\mathbb{P}_{n,\ell}\left\{m_{0}(\pi(X),X)+\frac{\mathbf{1}_{\{A=\pi(X)\}}}{f_{0}(A|X)}(\mathcal{Y}^{-1}-m_{0}(\pi(X),X))\right\}(t)-\mathbb{E}[\mathcal{Y}^{-1}[\pi(X)]](t)\right|\nonumber\\
				&\leq \underset{\pi\in\Pi}{\sup}\left|
				\frac{1}{L}\;\underset{\ell=1}{\overset{L}{\sum}}\;\mathbb{P}_{n,\ell}\mathbf{1}_{\{A=\pi(X)\}}\bigg(\mathcal{Y}^{-1}-\hat{m}_{0}^{\ell}(\pi(X),X)\bigg)(t)\bigg(\hat{g}_{0}^{\ell}(A,X)-g(A,X)\bigg)
				\right|\label{eqt:final result 2 no expectation-DR}\\
				&\quad+\underset{\pi\in\Pi}{\sup}\left|
				\frac{1}{L}\;\underset{\ell=1}{\overset{L}{\sum}}\;\mathbb{P}_{n,\ell}\left(\mathbf{1}_{\{A=\pi(X)\}}g(A,X)-1\right)\bigg(m_{0}(\pi(X),X)-\hat{m}_{0}^{\ell}(\pi(X),X)\bigg)(t)
				\right|\label{eqt:final result 3 no expectation-DR}\\
				&\quad+\underset{\pi\in\Pi}{\sup}\left|\mathbb{P}_{N}\left\{m_{0}(\pi(X),X)+\frac{\mathbf{1}_{\{A=\pi(X)\}}}{f_{0}(A|X)}(\mathcal{Y}^{-1}-m_{0}(\pi(X),X))\right\}(t)-\mathbb{E}[\mathcal{Y}^{-1}[\pi(X)]](t)\right|.\label{eqt:final result 4 no expectation-DR}
			\end{align}
		}\noindent

		The goal is finding stochastic bounds of Eqns. \eqref{eqt:final result 2 no expectation-DR} - \eqref{eqt:final result 4 no expectation-DR}.

		Before bounding these terms, we record the two orthogonality identities that remove the standalone first-order nuisance bias.
		For any fixed $\pi$, $t$, and fold $\ell$, conditioning on $\mathcal I_{-\ell}$ gives
		\begin{align}
		&\mathbb E\!\left[
		\mathbf 1_{\{A=\pi(X)\}}
		\big(\mathcal Y^{-1}(t)-m_0(\pi(X),X)(t)\big)
		\big(\hat g_0^\ell(A,X)-g(A,X)\big)
		\mid \mathcal I_{-\ell}
		\right]
		=0,
		\label{eq:orth_propensity_residual}
		\\
		&\mathbb E\!\left[
		\left(\mathbf 1_{\{A=\pi(X)\}}g(A,X)-1\right)
		\big(m_0(\pi(X),X)-\hat m_0^\ell(\pi(X),X)\big)(t)
		\mid \mathcal I_{-\ell}
		\right]
		=0.
		\label{eq:orth_outcome_residual}
		\end{align}
		Indeed, \eqref{eq:orth_propensity_residual} follows from
		\[
		\mathbb E\!\left[
		\mathcal Y^{-1}(t)-m_0(A,X)(t)
		\mid A,X
		\right]=0,
		\]
		while \eqref{eq:orth_outcome_residual} follows from
		\[
		\mathbb E\!\left[
		\mathbf 1_{\{A=\pi(X)\}}g(A,X)-1
		\mid X
		\right]
		=
		f_0(\pi(X)|X)\frac{1}{f_0(\pi(X)|X)}-1
		=0.
		\]
		Thus, the only non-centered nuisance bias is of product order in the propensity and outcome-regression errors.

		\paragraph{Step II: Studying Eqns. \eqref{eqt:final result 2 no expectation-DR} and \eqref{eqt:final result 3 no expectation-DR}.}
		We now bound the two nuisance-dependent terms in the corrected decomposition.
		
		We first bound Eqn. \eqref{eqt:final result 2 no expectation-DR}. Indeed, we have
		{\small
			\begin{align}
				&\text{Eqn. \eqref{eqt:final result 2 no expectation-DR}}\nonumber\\
				&=\underset{\pi\in\Pi}{\sup}\left|
				\frac{1}{L}\;\underset{\ell=1}{\overset{L}{\sum}}\;\mathbb{P}_{n,\ell}\mathbf{1}_{\{A=\pi(X)\}}\bigg(\mathcal{Y}^{-1}-\hat{m}_{0}^{\ell}(\pi(X),X)\bigg)(t)\bigg(\hat{g}_{0}^{\ell}(A,X)-g(A,X)\bigg)
				\right|\nonumber\\
				&\leq \frac{1}{L}\;\underset{\ell=1}{\overset{L}{\sum}}\;\underset{\pi\in\Pi}{\sup}\left|\mathbb{P}_{n,\ell}\left(
				\begin{aligned}
					&\mathbf{1}_{\{A=\pi(X)\}}\bigg(\mathcal{Y}^{-1}-\hat{m}_{0}^{\ell}(\pi(X),X)\bigg)(t)\bigg(\hat{g}_{0}^{\ell}(A,X)-g(A,X)\bigg)\\
					&-\mathbb{E}\left[\mathbf{1}_{\{A=\pi(X)\}}\bigg(\mathcal{Y}^{-1}-\hat{m}_{0}^{\ell}(\pi(X),X)\bigg)(t)\bigg(\hat{g}_{0}^{\ell}(A,X)-g(A,X)\bigg)\right]
				\end{aligned}
				\right)
				\right|\label{eqt:bound 2 DR 1}\\
				&\quad+\frac{1}{L}\;\underset{\ell=1}{\overset{L}{\sum}}\;\underset{\pi\in\Pi}{\sup}\left|
				\mathbb{E}\left[\mathbf{1}_{\{A=\pi(X)\}}\bigg(\mathcal{Y}^{-1}-\hat{m}_{0}^{\ell}(\pi(X),X)\bigg)(t)\bigg(\hat{g}_{0}^{\ell}(A,X)-g(A,X)\bigg)\right]\right|.\label{eqt:bound 2 DR 2}
			\end{align} 
		}\noindent
		We bound Eqn. \eqref{eqt:bound 2 DR 1}. Denote
		{\small
			\begin{equation*}
				\begin{aligned}
					\Gamma_{2}^{\ell}(t):=\underset{\pi\in\Pi}{\sup}\;\left|\mathbb{P}_{n,\ell}\left(
					\begin{aligned}
						&\mathbf{1}_{\{A=\pi(X)\}}\bigg(\mathcal{Y}^{-1}-\hat{m}_{0}^{\ell}(\pi(X),X)\bigg)(t)\bigg(\hat{g}_{0}^{\ell}(A,X)-g(A,X)\bigg)\\
						&-\mathbb{E}\left[\mathbf{1}_{\{A=\pi(X)\}}\bigg(\mathcal{Y}^{-1}-\hat{m}_{0}^{\ell}(\pi(X),X)\bigg)(t)\bigg(\hat{g}_{0}^{\ell}(A,X)-g(A,X)\bigg)\right]
					\end{aligned}
					\right)
					\right|.
				\end{aligned}
			\end{equation*}
		}\noindent
		Also, denote
		\begin{equation*}
			\begin{aligned}
				&\left|\mathbb{P}_{n,\ell}\left(
				\begin{aligned}
					&\mathbf{1}_{\{A=a_{i}\}}\bigg(\mathcal{Y}^{-1}-\hat{m}_{0}^{\ell}(a_{i},X)\bigg)(t)\bigg(\hat{g}_{0}^{\ell}(A,X)-g(A,X)\bigg)\\
					&-\mathbb{E}\left[\mathbf{1}_{\{A=a_{i}\}}\bigg(\mathcal{Y}^{-1}-\hat{m}_{0}^{\ell}(a_{i},X)\bigg)(t)\bigg(\hat{g}_{0}^{\ell}(A,X)-g(A,X)\bigg)\right]
				\end{aligned}
				\right)\right|:=\Gamma_{2}^{\ell;i}(t).
			\end{aligned}
		\end{equation*}
		Define $n_\ell := |\mathcal I_\ell| = N/L$. Using Hoeffding's inequality, we have for any $\varepsilon>0$,
		\begin{equation*}
			\mathbb{P}\left\{\Gamma_{2}^{\ell;i}(t)\geq \varepsilon\big|\mathcal{I}_{-\ell}\right\}\leq 2\exp\left(-\frac{n_\ell\varepsilon^2}{2B^2}\right),
		\end{equation*}
		where 
		\begin{equation*}
			\begin{aligned}
				&\left|\mathbf{1}_{\{A=a_{i}\}}\bigg(\mathcal{Y}^{-1}-\hat{m}_{0}^{\ell}(a_{i},X)\bigg)(t)\bigg(\hat{g}_{0}^{\ell}(A,X)-g(A,X)\bigg)\right|\\
				&\leq\left|\mathbf{1}_{\{A=a_{i}\}}\bigg(\mathcal{Y}^{-1}-m_{0}(a_{i},X)\bigg)(t)\bigg(\hat{g}_{0}^{\ell}(A,X)-g(A,X)\bigg)\right|\\
				&\quad+\left|\mathbf{1}_{\{A=a_{i}\}}\bigg(m_{0}(a_{i},X)-\hat{m}_{0}^{\ell}(a_{i},X)\bigg)(t)\bigg(\hat{g}_{0}^{\ell}(A,X)-g(A,X)\bigg)\right|\\
				&\leq \frac{\|\hat{f}_{0}^{\ell}-f_{0}\|_{\infty}\;\bigg(2M+\|\hat{m}_{0}^{\ell}-m_{0}\|_{\infty,[0,1]}\bigg)}{\underline{f}^{2}}:=B.
			\end{aligned}
		\end{equation*}
		Since there are at most $m_{\Pi}(n_\ell)$ distinct labeling patterns realized by $\Pi$ on $\mathcal{I}_{\ell}$. Applying a union bound over these patterns yields
		\begin{equation*}
			\mathbb{P}\left\{\Gamma_{2}^{\ell}(t)\geq\varepsilon\big|\mathcal{I}_{-\ell}\right\}\leq 2m_{\Pi}(n_\ell)\exp\left(-\frac{n_\ell\varepsilon^2}{2B^2}\right).
		\end{equation*}
		Choose $\varepsilon$ so that the RHS equals $\delta$:
		\begin{equation*}
			2\,m_{\Pi}(n_\ell)\exp\!\left(-\frac{n_\ell\varepsilon^2}{2B^{2}}\right)=\delta\quad\Longleftrightarrow\quad\varepsilon=B\sqrt{\frac{2\log\left(\frac{2m_{\Pi}(n_\ell)}{\delta}\right)}{n_\ell}}.
		\end{equation*}
		As a result, for each fixed \(t\in[0,1]\), with probability at least \(1-\delta\), 
		{\small
			\begin{equation*}
				\begin{aligned}
					&\Gamma_{2}^{\ell}(t)\leq \frac{\|\hat{f}_{0}^{\ell}-f_{0}\|_{\infty}\;\bigg(2M+\|\hat{m}_{0}^{\ell}-m_{0}\|_{\infty,[0,1]}\bigg)}{\underline{f}^{2}}\sqrt{\frac{2\log\left(\frac{2m_{\Pi}(n_\ell)}{\delta}\right)}{n_\ell}}\\
					&\leq \frac{2\|\hat{f}_{0}^{\ell}-f_{0}\|_{\infty}\;\bigg(2M+\|\hat{m}_{0}^{\ell}-m_{0}\|_{\infty,[0,1]}\bigg)}{\underline{f}^{2}}\left(\sqrt{\frac{2L\operatorname{N\text{-}dim}(\Pi)\log\left(e\frac{N}{L}d\right)}{N}}+\sqrt{\frac{2L\log\left(\frac{2}{\delta}\right)}{N}}\right).
				\end{aligned}
			\end{equation*}
		}\noindent
		In particular, for each fixed grid point \(t_j\), with probability at least \(1-\frac{\delta}{\mathcal{J}}\),
		{\small
			\begin{equation*}
				\begin{aligned}
					\Gamma_{2}^{\ell}(t_{j})\leq&\frac{2\|\hat{f}_{0}^{\ell}-f_{0}\|_{\infty}\;\bigg(2M+\|\hat{m}_{0}^{\ell}-m_{0}\|_{\infty,[0,1]}\bigg)}{\underline{f}^{2}}\\
                    &\times\left(\sqrt{\frac{2L\operatorname{N\text{-}dim}(\Pi)\log\left(e\frac{N}{L}d\right)}{N}}+\sqrt{\frac{2L\log\left(\frac{2\mathcal{J}}{\delta}\right)}{N}}\right).
				\end{aligned}
			\end{equation*}
		}\noindent
		Next, we consider \(\left|\bigg(\mathcal{Y}^{-1}-\hat{m}_{0}^{\ell}(\pi(X),X)\bigg)(t)-\bigg(\mathcal{Y}^{-1}-\hat{m}_{0}^{\ell}(\pi(X),X)\bigg)(s)\right|\). Note that for any \(s,\;t\),
		\begin{equation*}
			\begin{aligned}
				&\left|\bigg(\mathcal{Y}^{-1}-\hat{m}_{0}^{\ell}(\pi(X),X)\bigg)(t)-\bigg(\mathcal{Y}^{-1}-\hat{m}_{0}^{\ell}(\pi(X),X)\bigg)(s)\right|\\
				&\leq\left|\mathcal{Y}^{-1}(t)-\mathcal{Y}^{-1}(s)\right|+\left|\hat{m}_{0}^{\ell}(\pi(X),X)(s)-\hat{m}_{0}^{\ell}(\pi(X),X)(t)\right|\\
				&\leq\mathcal{U}|t-s|+\left|\hat{m}_{0}^{\ell}(\pi(X),X)(s)-m_{0}(\pi(X),X)(s)\right|\\
				&\quad+\left|m_{0}(\pi(X),X)(s)-m_{0}(\pi(X),X)(t)\right|+\left|m_{0}(\pi(X),X)(t)-\hat{m}_{0}^{\ell}(\pi(X),X)(t)\right|\\
				&\leq\mathcal{U}|t-s|+\|\hat{m}_{0}^{\ell}-m_{0}\|_{\infty,[0,1]}+\mathcal{U}|t-s|+\|\hat{m}_{0}^{\ell}-m_{0}\|_{\infty,[0,1]}\\
				&= 2\mathcal{U}|t-s|+2\;\|\hat{m}_{0}^{\ell}-m_{0}\|_{\infty,[0,1]},
			\end{aligned}
		\end{equation*}
		we thus have
		\begin{align*}
			&\Gamma_{2}^{\ell}(t)-\Gamma_{2}^{\ell}(s)\\
			&\leq \underset{\pi\in\Pi}{\sup}\;\left|
			\begin{aligned}
				&\mathbb{E}\left[\mathbf{1}_{\{A=\pi(X)\}}\bigg(\mathcal{Y}^{-1}-\hat{m}_{0}^{\ell}(\pi(X),X)\bigg)(t)\bigg(\hat{g}_{0}^{\ell}(A,X)-g(A,X)\bigg)\right]\\
				&-\mathbb{E}\left[\mathbf{1}_{\{A=\pi(X)\}}\bigg(\mathcal{Y}^{-1}-\hat{m}_{0}^{\ell}(\pi(X),X)\bigg)(s)\bigg(\hat{g}_{0}^{\ell}(A,X)-g(A,X)\bigg)\right]
			\end{aligned}
			\right|\\
			&\quad+\underset{\pi\in\Pi}{\sup}\;\left|
			\begin{aligned}
				&\mathbb{P}_{n,\ell}\mathbf{1}_{\{A=\pi(X)\}}\bigg(\mathcal{Y}^{-1}-\hat{m}_{0}^{\ell}(\pi(X),X)\bigg)(s)\bigg(\hat{g}_{0}^{\ell}(A,X)-g(A,X)\bigg)\\
				&-\mathbb{P}_{n,\ell}\mathbf{1}_{\{A=\pi(X)\}}\bigg(\mathcal{Y}^{-1}-\hat{m}_{0}^{\ell}(\pi(X),X)\bigg)(t)\bigg(\hat{g}_{0}^{\ell}(A,X)-g(A,X)\bigg)
			\end{aligned}
			\right|\\
			&\leq \frac{4\|\hat{f}_{0}^{\ell}-f_{0}\|_{\infty}\left(\mathcal{U}|t-s|+\|\hat{m}_{0}^{\ell}-m_{0}\|_{\infty,[0,1]}\right)}{\underline{f}^{2}}.
		\end{align*}
		Similarly, we also have \(\Gamma_{2}^{\ell}(s)-\Gamma_{2}^{\ell}(t)\leq \frac{4\|\hat{f}_{0}^{\ell}-f_{0}\|_{\infty}\left(\mathcal{U}|t-s|+\|\hat{m}_{0}^{\ell}-m_{0}\|_{\infty,[0,1]}\right)}{\underline{f}^{2}}\). Combining the two cases, we have \(\left|\Gamma_{2}^{\ell}(t)-\Gamma_{2}^{\ell}(s)\right|\leq \frac{4\|\hat{f}_{0}^{\ell}-f_{0}\|_{\infty}\left(\mathcal{U}|t-s|+\|\hat{m}_{0}^{\ell}-m_{0}\|_{\infty,[0,1]}\right)}{\underline{f}^{2}}\). By Lemma \ref{lemma:sup all bound}, we then have, with probability at least \(1-\delta\),
		{\small
			\begin{equation*}
				\begin{aligned}
					\underset{t\in[0,1]}{\sup}\;\Gamma_{2}^{\ell}(t)\leq&\frac{2\|\hat{f}_{0}^{\ell}-f_{0}\|_{\infty}\;\bigg(2M+\|\hat{m}_{0}^{\ell}-m_{0}\|_{\infty,[0,1]}\bigg)}{\underline{f}^{2}}\\
                    &\times\left(\sqrt{\frac{2L\operatorname{N\text{-}dim}(\Pi)\log\left(e\frac{N}{L}d\right)}{N}}+\sqrt{\frac{2L\log\left(\frac{2\mathcal{J}}{\delta}\right)}{N}}\right)\\
					&+\frac{4\|\hat{f}_{0}^{\ell}-f_{0}\|_{\infty}\;\|\hat{m}_{0}^{\ell}-m_{0}\|_{\infty,[0,1]}}{\underline{f}^{2}}+\frac{4\eta\;\mathcal{U}\|\hat{f}_{0}^{\ell}-f_{0}\|_{\infty}}{\underline{f}^{2}}.
				\end{aligned}
			\end{equation*}
		}\noindent
		As a result, we have, with probability at least \(1-3\delta\),
		{\small
			\begin{equation*}
				\begin{aligned}
					\underset{t\in[0,1]}{\sup}\;\Gamma_{2}^{\ell}(t)\leq&\frac{2\operatorname{Rate}_{f_{0}}(N,\delta)\;\bigg(2M+\operatorname{Rate}_{m_{0}}(N,\delta)\bigg)}{\underline{f}^{2}}\\
                    &\times\left(\sqrt{\frac{2L\operatorname{N\text{-}dim}(\Pi)\log\left(e\frac{N}{L}d\right)}{N}}+\sqrt{\frac{2L\log\left(\frac{2\mathcal{J}}{\delta}\right)}{N}}\right)\\
					&+\frac{4\operatorname{Rate}_{f_{0}}(N,\delta)\;\operatorname{Rate}_{m_{0}}(N,\delta)}{\underline{f}^{2}}+\frac{4\eta\;\mathcal{U}\operatorname{Rate}_{f_{0}}(N,\delta)}{\underline{f}^{2}}.
				\end{aligned}
			\end{equation*}
		}\noindent
		Equivalently, we have, with probability at least \(1-\frac{\delta}{L}\),
		{\small
			\begin{equation*}
				\begin{aligned}
					\underset{t\in[0,1]}{\sup}\;\Gamma_{2}^{\ell}(t)\leq&\frac{2\operatorname{Rate}_{f_{0}}\left(N,\frac{\delta}{3L}\right)\;\bigg(2M+\operatorname{Rate}_{m_{0}}\left(N,\frac{\delta}{3L}\right)\bigg)}{\underline{f}^{2}}\\
                    &\times\left(\sqrt{\frac{2L\operatorname{N\text{-}dim}(\Pi)\log\left(e\frac{N}{L}d\right)}{N}}+\sqrt{\frac{2L\log\left(\frac{6\mathcal{J}L}{\delta}\right)}{N}}\right)\\
					&+\frac{4\operatorname{Rate}_{f_{0}}\left(N,\frac{\delta}{3L}\right)\;\operatorname{Rate}_{m_{0}}\left(N,\frac{\delta}{3L}\right)}{\underline{f}^{2}}+\frac{4\eta\;\mathcal{U}\operatorname{Rate}_{f_{0}}\left(N,\frac{\delta}{3L}\right)}{\underline{f}^{2}},
				\end{aligned}
			\end{equation*}
		}\noindent
		which implies that, with probability at least \(1-\delta\),
		{\small
			\begin{equation}
				\begin{aligned}\label{eqt:temp 2 final result of stochastic bound DR-1}
					\text{Eqn. \eqref{eqt:bound 2 DR 1}}=&\frac{1}{L}\underset{\ell=1}{\overset{L}{\sum}}\;\Gamma_{2}^{\ell}(t)\leq \frac{1}{L}\underset{\ell=1}{\overset{L}{\sum}}\;\underset{t\in[0,1]}{\sup}\;\Gamma_{2}^{\ell}(t)\\
					\leq&\frac{2\operatorname{Rate}_{f_{0}}\left(N,\frac{\delta}{3L}\right)\;\bigg(2M+\operatorname{Rate}_{m_{0}}\left(N,\frac{\delta}{3L}\right)\bigg)}{\underline{f}^{2}}\\
                    &\times\left(\sqrt{\frac{2L\operatorname{N\text{-}dim}(\Pi)\log\left(e\frac{N}{L}d\right)}{N}}+\sqrt{\frac{2L\log\left(\frac{6\mathcal{J}L}{\delta}\right)}{N}}\right)\\
					&+\frac{4\operatorname{Rate}_{f_{0}}\left(N,\frac{\delta}{3L}\right)\;\operatorname{Rate}_{m_{0}}\left(N,\frac{\delta}{3L}\right)}{\underline{f}^{2}}+\frac{4\eta\;\mathcal{U}\operatorname{Rate}_{f_{0}}\left(N,\frac{\delta}{3L}\right)}{\underline{f}^{2}}.
				\end{aligned}
			\end{equation}
		}\noindent
			
		Next, we bound Eqn. \eqref{eqt:bound 2 DR 2}. By the orthogonality identity
		\eqref{eq:orth_propensity_residual}, conditioning on $\mathcal I_{-\ell}$ yields
		\begin{equation}
		\begin{aligned}
		\label{eqt:temp 2 final result of stochastic bound DR-2}
		&\left|
		\mathbb E\left[
		\mathbf 1_{\{A=\pi(X)\}}
		\big(\mathcal Y^{-1}-\hat m_0^\ell(\pi(X),X)\big)(t)
		\big(\hat g_0^\ell(A,X)-g(A,X)\big)
		\mid \mathcal I_{-\ell}
		\right]
		\right|
		\\
		&=
		\left|
		\mathbb E\left[
		\mathbf 1_{\{A=\pi(X)\}}
		\big(m_0(\pi(X),X)-\hat m_0^\ell(\pi(X),X)\big)(t)
		\big(\hat g_0^\ell(A,X)-g(A,X)\big)
		\mid \mathcal I_{-\ell}
		\right]
		\right|
		\\
		&\le
		\mathbb E\left[
		\left|
		\big(m_0(\pi(X),X)-\hat m_0^\ell(\pi(X),X)\big)(t)
		\right|
		\,
		\left|
		\hat g_0^\ell(A,X)-g(A,X)
		\right|
		\mid \mathcal I_{-\ell}
		\right]
		\\
		&\le
		\frac{
		\|\hat m_0^\ell-m_0\|_{\infty,[0,1]}
		\|\hat f_0^\ell-f_0\|_{\infty}
		}{\underline f^2}.
		\end{aligned}
		\end{equation}
		Consequently, on the nuisance event in Theorem~\ref{thm:statistical guarantee DR},
		\begin{equation}
		\label{eqt:temp 2 final result of stochastic bound DR-2-rate}
		\text{Eqn. \eqref{eqt:bound 2 DR 2}}
		\le
		\frac{
		\operatorname{Rate}_{f_{0}}\left(N,\frac{\delta}{3L}\right)
		\operatorname{Rate}_{m_{0}}\left(N,\frac{\delta}{3L}\right)
		}{\underline f^2}.
		\end{equation}
		Combining Eqns. \eqref{eqt:temp 2 final result of stochastic bound DR-1} and
		\eqref{eqt:temp 2 final result of stochastic bound DR-2-rate}, we have, with probability at least \(1-\delta\),
		\begin{equation}
		\begin{aligned}
		\label{eqt:final result of stochastic bound DR-2}
		&\text{Eqn. \eqref{eqt:final result 2 no expectation-DR}}
		\leq
		\text{Eqn. \eqref{eqt:bound 2 DR 1}}
		+
		\text{Eqn. \eqref{eqt:bound 2 DR 2}}
		\\
		&\leq
		\frac{2\operatorname{Rate}_{f_{0}}\left(N,\frac{\delta}{3L}\right)
		\bigg(2M+\operatorname{Rate}_{m_{0}}\left(N,\frac{\delta}{3L}\right)\bigg)}
		{\underline{f}^{2}}
		\\
		&\quad\times
		\left(
		\sqrt{\frac{2L\operatorname{N\text{-}dim}(\Pi)\log\left(e\frac{N}{L}d\right)}{N}}
		+
		\sqrt{\frac{2L\log\left(\frac{6\mathcal{J}L}{\delta}\right)}{N}}
		\right)
		\\
		&\quad+
		\frac{5\operatorname{Rate}_{f_{0}}\left(N,\frac{\delta}{3L}\right)
		\operatorname{Rate}_{m_{0}}\left(N,\frac{\delta}{3L}\right)}
		{\underline{f}^{2}}
		+
		\frac{4\eta\mathcal{U}\operatorname{Rate}_{f_{0}}\left(N,\frac{\delta}{3L}\right)}
		{\underline{f}^{2}}.
		\end{aligned}
		\end{equation}
		
		We also bound Eqn. \eqref{eqt:final result 3 no expectation-DR} in a similar manner. Indeed, 
		{\small
			\begin{align}
				&\text{Eqn. \eqref{eqt:final result 3 no expectation-DR}}\nonumber\\
				&=\underset{\pi\in\Pi}{\sup}\left|
				\frac{1}{L}\;\underset{\ell=1}{\overset{L}{\sum}}\;\mathbb{P}_{n,\ell}\left(\mathbf{1}_{\{A=\pi(X)\}}g(A,X)-1\right)\bigg(m_{0}(\pi(X),X)-\hat{m}_{0}^{\ell}(\pi(X),X)\bigg)(t)
				\right|\nonumber\\
				&\leq \frac{1}{L}\;\underset{\ell=1}{\overset{L}{\sum}}\;\underset{\pi\in\Pi}{\sup}\left|\mathbb{P}_{n,\ell}\left(
				\begin{aligned}
					&\left(\mathbf{1}_{\{A=\pi(X)\}}g(A,X)-1\right)\bigg(m_{0}(\pi(X),X)-\hat{m}_{0}^{\ell}(\pi(X),X)\bigg)(t)\\
					&-\mathbb{E}\left[\left(\mathbf{1}_{\{A=\pi(X)\}}g(A,X)-1\right)\bigg(m_{0}(\pi(X),X)-\hat{m}_{0}^{\ell}(\pi(X),X)\bigg)(t)\right]
				\end{aligned}
				\right)
				\right|\label{eqt:bound 3 DR 1}\\
				&\quad+\frac{1}{L}\;\underset{\ell=1}{\overset{L}{\sum}}\;\underset{\pi\in\Pi}{\sup}\left|
				\mathbb{E}\left[\left(\mathbf{1}_{\{A=\pi(X)\}}g(A,X)-1\right)\bigg(m_{0}(\pi(X),X)-\hat{m}_{0}^{\ell}(\pi(X),X)\bigg)(t)\right]\right|.\label{eqt:bound 3 DR 2}
			\end{align} 
		}\noindent
	
		We bound Eqn. \eqref{eqt:bound 3 DR 1}. Let
		\[
		\Delta_\ell(a,x,t):=m_0(a,x)(t)-\hat m_0^\ell(a,x)(t),
		\qquad
		W_\pi(X,A):=\mathbf 1_{\{A=\pi(X)\}}g(A,X)-1.
		\]
		By the fold-conditional orthogonality identity \eqref{eq:orth_outcome_residual}, for every fixed $(\pi,t)$,
		\[
		\mathbb E\left[W_\pi(X,A)\Delta_\ell(\pi(X),X,t)\mid \mathcal I_{-\ell}\right]=0.
		\]
		Thus the centered empirical process in Eqn. \eqref{eqt:bound 3 DR 1} can be written as
		\[
		\Gamma_3^\ell(t)
		:=
		\sup_{\pi\in\Pi}
		\left|
		\mathbb P_{n,\ell}\left[W_\pi(X,A)\Delta_\ell(\pi(X),X,t)\right]
		\right|.
		\]
		For the growth-function argument below, we first condition on the nuisance-training folds and on the held-out covariates; to avoid overloading notation, this conditioning is still written as conditioning on $\mathcal I_{-\ell}$. Let $n_\ell:=|\mathcal I_\ell|=N/L$ and define the set of labeling patterns realized by $\Pi$ on the held-out fold,
		\[
		\mathcal V_\ell
		:=
		\left\{(\pi(X_r))_{r\in\mathcal I_\ell}:\pi\in\Pi\right\}.
		\]
		For a fixed pattern $v=(v_r)_{r\in\mathcal I_\ell}\in\mathcal V_\ell$, define
		\[
		Z_{r,v_r}(t)
		:=
		\left(\mathbf 1_{\{A_r=v_r\}}g(v_r,X_r)-1\right)\Delta_\ell(v_r,X_r,t).
		\]
		Then
		\[
		\mathbb E\left[Z_{r,v_r}(t)\mid \mathcal I_{-\ell}\right]=0,
		\qquad
		|Z_{r,v_r}(t)|
		\le
		\left(1+\frac1{\underline f}\right)
		\|\hat m_0^\ell-m_0\|_{\infty,[0,1]}
		=:B_\ell.
		\]
		Using Hoeffding's inequality, for any fixed $v\in\mathcal V_\ell$, fixed $t$, and any $\varepsilon>0$,
		\[
		\mathbb P\left\{
		\left|
		\frac1{n_\ell}\sum_{r\in\mathcal I_\ell}Z_{r,v_r}(t)
		\right|
		\ge \varepsilon
		\,\middle|\,\mathcal I_{-\ell}
		\right\}
		\le
		2\exp\left(-\frac{n_\ell\varepsilon^2}{2B_\ell^2}\right).
		\]
		Since $|\mathcal V_\ell|\le m_\Pi(n_\ell)$, a union bound over the policy-induced labeling patterns gives
		\[
		\mathbb P\left\{\Gamma_3^\ell(t)\ge\varepsilon\mid\mathcal I_{-\ell}\right\}
		\le
		2m_\Pi(n_\ell)\exp\left(-\frac{n_\ell\varepsilon^2}{2B_\ell^2}\right).
		\]
		Choosing the right-hand side to be $\delta$ yields, for each fixed $t\in[0,1]$, with probability at least $1-\delta$,
		\[
		\Gamma_3^\ell(t)
		\le
		\left(1+\frac1{\underline f}\right)
		\|\hat m_0^\ell-m_0\|_{\infty,[0,1]}
		\sqrt{\frac{2\log\left(2m_\Pi(n_\ell)/\delta\right)}{n_\ell}}.
		\]
Using the multi-class Sauer bound and $n_\ell=N/L$, set
$u=\delta/(4L\mathcal J)$ in the preceding fixed-$t$ bound. Then, for any
fixed fold-grid pair $(\ell,j)$, with probability at least $1-u$,
\[
\Gamma_3^\ell(t_j)
\le
2\left(1+\frac1{\underline f}\right)
\|\hat m_0^\ell-m_0\|_{\infty,[0,1]}
\left(
\sqrt{\frac{2L\operatorname{N\text{-}dim}(\Pi)\log\left(e\frac NL d\right)}{N}}
+
\sqrt{\frac{2L\log\left(8\mathcal J L/\delta\right)}{N}}
\right).
\]
A union bound over all $L\mathcal J$ fold-grid pairs implies that the same
bound holds simultaneously for all $\ell=1,\ldots,L$ and all grid points
$j=1,\ldots,\mathcal J$, with probability at least $1-\delta/4$.
		Next, for $s,t\in[0,1]$, the elementary inequality
		\[
		\left|\sup_{\pi}|u_\pi|-\sup_{\pi}|v_\pi|\right|
		\le
		\sup_{\pi}|u_\pi-v_\pi|
		\]
		implies
		\[
		|\Gamma_3^\ell(t)-\Gamma_3^\ell(s)|
		\le
		\sup_{\pi\in\Pi}
		\left|
		\mathbb P_{n,\ell}\left[
		W_\pi(X,A)\{\Delta_\ell(\pi(X),X,t)-\Delta_\ell(\pi(X),X,s)\}
		\right]
		\right|.
		\]
		For the grid-to-continuum step, we use the fitted-Lipschitz condition maintained in Theorem~\ref{thm:statistical guarantee DR}, so that
		\[
		\sup_{a,x}\left|\Delta_\ell(a,x,t)-\Delta_\ell(a,x,s)\right|
		\le 2\mathcal U|t-s|.
		\]
		In particular, the deterministic envelope bound gives
		\[
		|\Gamma_3^\ell(t)-\Gamma_3^\ell(s)|
		\le
		2\left(1+\frac1{\underline f}\right)\mathcal U|t-s|.
		\]
		On this fixed-grid event, Lemma~\ref{lemma:sup all bound} and the deterministic
		$t$-oscillation bound below imply that, for every $\ell=1,\ldots,L$,
		\[
		\sup_{t\in[0,1]}\Gamma_3^\ell(t)
		\le
		2\left(1+\frac1{\underline f}\right)
		\|\hat m_0^\ell-m_0\|_{\infty,[0,1]}
		\left(
		\sqrt{\frac{2L\operatorname{N\text{-}dim}(\Pi)\log\left(e\frac NL d\right)}{N}}
		+
		\sqrt{\frac{2L\log\left(8\mathcal J L/\delta\right)}{N}}
		\right)
		+
		2\left(1+\frac1{\underline f}\right)\mathcal U\eta.
		\]
Intersecting the preceding fixed-grid event with the nuisance event
\[
\max_{1\le \ell\le L}
\|\hat m_0^\ell-m_0\|_{\infty,[0,1]}
\le
\operatorname{Rate}_{m_0}\left(N,\frac{\delta}{3L}\right),
\]
and using a union bound over the nuisance folds, we obtain, with probability
at least $1-\delta$,
		\begin{equation}
		\begin{aligned}
		\label{eqt:temp 3 final result of stochastic bound DR-1}
		\text{Eqn. \eqref{eqt:bound 3 DR 1}}
		& \leq \frac1L\sum_{\ell=1}^{L}\Gamma_3^\ell(t)
		\le
		\frac1L\sum_{\ell=1}^{L}\sup_{t\in[0,1]}\Gamma_3^\ell(t)
		\\
		&\le
		2\left(1+\frac1{\underline f}\right)
		\operatorname{Rate}_{m_{0}}\left(N,\frac{\delta}{3L}\right)
		\left(
		\sqrt{\frac{2L\operatorname{N\text{-}dim}(\Pi)\log\left(e\frac NL d\right)}{N}}
		+
		\sqrt{\frac{2L\log\left(8\mathcal J L/\delta\right)}{N}}
		\right)
		\\
		&\quad+
		2\left(1+\frac1{\underline f}\right)\mathcal U\eta.
		\end{aligned}
		\end{equation}
	
		Next, we bound Eqn. \eqref{eqt:bound 3 DR 2}. By the orthogonality identity
		\eqref{eq:orth_outcome_residual}, conditioning on $\mathcal I_{-\ell}$ gives
		\begin{equation}
		\begin{aligned}
		\label{eqt:temp 3 final result of stochastic bound DR-2}
		&\mathbb E\left[
		\left(\mathbf 1_{\{A=\pi(X)\}}g(A,X)-1\right)
		\big(m_0(\pi(X),X)-\hat m_0^\ell(\pi(X),X)\big)(t)
		\mid \mathcal I_{-\ell}
		\right]
		\\
		&=
		\mathbb E\left[
		\mathbb E\left[
		\mathbf 1_{\{A=\pi(X)\}}g(A,X)-1
		\mid X
		\right]
		\big(m_0(\pi(X),X)-\hat m_0^\ell(\pi(X),X)\big)(t)
		\mid \mathcal I_{-\ell}
		\right]
		\\
		&=0.
		\end{aligned}
		\end{equation}
		Therefore,
		\begin{equation}
		\label{eqt:temp 3 final result of stochastic bound DR-2-zero}
		\text{Eqn. \eqref{eqt:bound 3 DR 2}}=0.
		\end{equation}
		Combining Eqns. \eqref{eqt:temp 3 final result of stochastic bound DR-1} and
		\eqref{eqt:temp 3 final result of stochastic bound DR-2-zero}, we have, with probability at least \(1-\delta\),
		\begin{equation}
		\begin{aligned}
		\label{eqt:final result of stochastic bound DR-3}
		&\text{Eqn. \eqref{eqt:final result 3 no expectation-DR}}
		\leq
		\text{Eqn. \eqref{eqt:bound 3 DR 1}}
		+
		\text{Eqn. \eqref{eqt:bound 3 DR 2}}
		\\
		&\leq
		2\left(1+\frac{1}{\underline{f}}\right)
		\operatorname{Rate}_{m_{0}}\left(N,\frac{\delta}{3L}\right)
		\left(
		\sqrt{\frac{2L\operatorname{N\text{-}dim}(\Pi)\log\left(e\frac{N}{L}d\right)}{N}}
		+
		\sqrt{\frac{2L\log\left(\frac{8\mathcal{J}L}{\delta}\right)}{N}}
		\right)
		\\
		&\quad+
		2\left(1+\frac{1}{\underline{f}}\right)\mathcal U\eta.
		\end{aligned}
		\end{equation}
		
		\paragraph{Step III: Studying Eqn. \eqref{eqt:final result 4 no expectation-DR}}
		Recall that
		\begin{equation*}
			\begin{aligned}
				\mathbb{E}[\mathcal{Y}^{-1}[\pi(X)]](t)=\mathbb{E}\left[m_{0}(\pi(X),X)+\frac{\mathbf{1}_{\{A=\pi(X)\}}}{f_{0}(A|X)}(\mathcal{Y}^{-1}-m_{0}(\pi(X),X))\right](t),
			\end{aligned}
		\end{equation*}
		we therefore have
		\begin{equation*}
			\begin{aligned}
				&\underset{\pi\in\Pi}{\sup}\left|\mathbb{P}_{N}\left\{m_{0}(\pi(X),X)+\frac{\mathbf{1}_{\{A=\pi(X)\}}}{f_{0}(A|X)}(\mathcal{Y}^{-1}-m_{0}(\pi(X),X))\right\}(t)-\mathbb{E}[\mathcal{Y}^{-1}[\pi(X)]](t)\right|\\
				&=\underset{\pi\in\Pi}{\sup}\left|
				\begin{aligned}
					&\mathbb{P}_{N}\left\{m_{0}(\pi(X),X)+\frac{\mathbf{1}_{\{A=\pi(X)\}}}{f_{0}(A|X)}(\mathcal{Y}^{-1}-m_{0}(\pi(X),X))\right\}(t)\\
					&-\mathbb{E}\left[m_{0}(\pi(X),X)+\frac{\mathbf{1}_{\{A=\pi(X)\}}}{f_{0}(A|X)}(\mathcal{Y}^{-1}-m_{0}(\pi(X),X))\right](t)
				\end{aligned}
				\right|:=\bigtriangleup(t).
			\end{aligned}
		\end{equation*}
		Also, denote
		\begin{equation*}
			\begin{aligned}
				&\left|
				\begin{aligned}
					&\mathbb{P}_{N}\left\{m_{0}(a_{i},X)+\frac{\mathbf{1}_{\{A=a_{i}\}}}{f_{0}(A|X)}(\mathcal{Y}^{-1}-m_{0}(a_{i},X))\right\}(t)\\
					&-\mathbb{E}\left[m_{0}(a_{i},X)+\frac{\mathbf{1}_{\{A=a_{i}\}}}{f_{0}(A|X)}(\mathcal{Y}^{-1}-m_{0}(a_{i},X))\right](t)
				\end{aligned}
				\right|:=\bigtriangleup^{i}(t).
			\end{aligned}
		\end{equation*}
		Using Hoeffding's inequality, we have: for any $\varepsilon>0$,
		\begin{equation*}
			\mathbb{P}\left\{\bigtriangleup^{i}(t)\geq \varepsilon\big| X_{1:N}\right\}\leq 2\exp\left(-\frac{N\varepsilon^2}{2B^2}\right),
		\end{equation*}
		where \(|m_{0}(a_{i},X)+\frac{\mathbf{1}_{\{A=a_{i}\}}}{f_{0}(A|X)}(\mathcal{Y}^{-1}-m_{0}(a_{i},X))(t)|\leq B\) such that we can choose \(B\) as \((1+\frac{2}{\underline{f}})M\). Since there are at most $m_{\Pi}(N)$ distinct labeling patterns realized by $\Pi$ on $X_{1:N}$. Applying a union bound over these patterns yields
		\begin{equation*}
			\mathbb{P}\left\{\bigtriangleup(t)\geq\varepsilon\big| X_{1:N}\right\}\leq 2m_{\Pi}(N)\exp\left(-\frac{N\varepsilon^2}{2B^2}\right).
		\end{equation*}
		Choose $\varepsilon$ so that the RHS equals $\delta$:
		\begin{equation*}
			2\,m_{\Pi}(N)\exp\!\left(-\frac{N\varepsilon^2}{2B^{2}}\right)=\delta\quad\Longleftrightarrow\quad\varepsilon=B\sqrt{\frac{2\log\left(\frac{2m_{\Pi}(N)}{\delta}\right)}{N}}.
		\end{equation*}
		As a result, for each fixed \(t\in[0,1]\), with probability at least \(1-\delta\), 
		\begin{equation*}
			\begin{aligned}
				\bigtriangleup(t)\leq \left(1+\frac{2}{\underline{f}}\right)M\sqrt{\frac{2\log\left(\frac{2m_{\Pi}(N)}{\delta}\right)}{N}}.
			\end{aligned}
		\end{equation*}
		In particular, for each fixed grid point \(t_j\), with probability at least \(1-\frac{\delta}{\mathcal{J}}\),
		\begin{equation*}
			\begin{aligned}
				\bigtriangleup(t_{j})&\leq 2\left(1+\frac{2}{\underline{f}}\right)M\left(\sqrt{\frac{2L\operatorname{N\text{-}dim}(\Pi)\log\left(e\frac{N}{L}d\right)}{N}}+\sqrt{\frac{2L\log\left(\frac{2\mathcal{J}}{\delta}\right)}{N}}\right).
			\end{aligned}
		\end{equation*}
		Note that
        {\small
		\begin{align*}
			& \bigtriangleup(t)-\bigtriangleup(s)\\
			&\leq\underset{\pi\in\Pi}{\sup}\left|
			\begin{aligned}
				&\mathbb{P}_{N}\left\{m_{0}(\pi(X),X)+\frac{\mathbf{1}_{\{A=\pi(X)\}}}{f_{0}(A|X)}(\mathcal{Y}^{-1}-m_{0}(\pi(X),X))\right\}(t)\\
				&-\mathbb{P}_{N}\left\{m_{0}(\pi(X),X)+\frac{\mathbf{1}_{\{A=\pi(X)\}}}{f_{0}(A|X)}(\mathcal{Y}^{-1}-m_{0}(\pi(X),X))\right\}(s)
			\end{aligned}
			\right|\\
			&\quad+\underset{\pi\in\Pi}{\sup}\left|
			\begin{aligned}
				&\mathbb{E}\left[m_{0}(\pi(X),X)+\frac{\mathbf{1}_{\{A=\pi(X)\}}}{f_{0}(A|X)}(\mathcal{Y}^{-1}-m_{0}(\pi(X),X))\right](s)\\
				&-\mathbb{E}\left[m_{0}(\pi(X),X)+\frac{\mathbf{1}_{\{A=\pi(X)\}}}{f_{0}(A|X)}(\mathcal{Y}^{-1}-m_{0}(\pi(X),X))\right](t)
			\end{aligned}
			\right|\\
			&\leq 2\left(1+\frac{2}{\underline{f}}\right)\mathcal{U}|t-s|,
		\end{align*}
        }\noindent
		implying that \(|\bigtriangleup(t)-\bigtriangleup(s)|\leq 2\left(1+\frac{2}{\underline{f}}\right)\mathcal{U}|t-s|\) by symmetrization. By Lemma \ref{lemma:sup all bound}, we then have, with probability at least \(1-\delta\),
        {\small
		\begin{equation}
					\begin{aligned}\label{eqt:final result of stochastic bound DR-4}
				\underset{t\in[0,1]}{\sup}\;\bigtriangleup(t)\leq&2\left(1+\frac{2}{\underline{f}}\right)M\left(\sqrt{\frac{2L\operatorname{N\text{-}dim}(\Pi)\log\left(e\frac{N}{L}d\right)}{N}}+\sqrt{\frac{2L\log\left(\frac{2\mathcal{J}}{\delta}\right)}{N}}\right)\\
				&\quad +\left(1+\frac{2}{\underline{f}}\right)\mathcal{U}\eta.
			\end{aligned}
		\end{equation}
        }\noindent

		\paragraph{Final combination.}
		Define
		\[
		r_f:=\operatorname{Rate}_{f_{0}}\left(N,\frac{\delta}{3L}\right),
		\qquad
		r_m:=\operatorname{Rate}_{m_{0}}\left(N,\frac{\delta}{3L}\right).
		\]
		Without loss of generality, the rate functions are taken to be nonincreasing in the confidence parameter after replacing them by their monotone envelopes. Also define
		\[
		\mathcal{V}_N(\Pi,\delta)
		:=
		\sqrt{\frac{2L\operatorname{N\text{-}dim}(\Pi)\log\left(e\frac{N}{L}d\right)}{N}}
		+
		\sqrt{\frac{2L\log\left(\frac{8\mathcal{J}L}{\delta}\right)}{N}}.
		\]
		Combining Eqns.~\eqref{eqt:final result of stochastic bound DR-2},
		\eqref{eqt:final result of stochastic bound DR-3}, and
		\eqref{eqt:final result of stochastic bound DR-4}, we have, with probability at least \(1-4\delta\),
		{\small
		\begin{align*}
		\mathcal{R}(\hat{\pi}^{\mathrm{DR}})
		&\leq
		2L_{U}
		\left[
		2M\left(1+\frac{2}{\underline f}\right)
		\mathcal{V}_N(\Pi,\delta)
		+
		\left(4+\frac{6}{\underline f}\right)\mathcal U\eta
		\right]
		\\
		&\quad+
		2L_U
		\Bigg[
		\left(
		\frac{2r_fr_m}{\underline f^2}
		+
		\frac{4Mr_f}{\underline f^2}
		+
		2\left(1+\frac{1}{\underline f}\right)r_m
		\right)
		\mathcal{V}_N(\Pi,\delta)
		\\
		&\qquad\qquad
		+
		\frac{5r_fr_m}{\underline f^2}
		+
		\frac{4\mathcal U r_f}{\underline f^2}\eta
		\Bigg].
		\end{align*}
		}\noindent
		This matches \eqref{eq:dr_regret_rearranged}.

	\end{proof}

\section{Proof of Theorem~\ref{thm:lower_bound_main}}\label{app:proof_lower_bound}

\begin{proof}
	Let $V:=\mathrm{N\text{-}dim}(\Pi)\ge 1$.
	By definition of the Natarajan dimension, there exists a set $S=\{x_1,\dots,x_V\}\subset\mathcal X$ that is Natarajan-shattered by $\Pi$.
	Hence there exist two functions $f_1,f_2:S\to\mathcal A$ with $f_1(x_i)\neq f_2(x_i)$ such that for every subset $S_0\subseteq S$ there exists $\pi\in\Pi$ with
	$\pi(x)=f_1(x)$ for $x\in S_0$ and $\pi(x)=f_2(x)$ for $x\in S\setminus S_0$.
	
	Fix $\delta_0\in(0,1/4)$ (to be chosen later) and let $\mathcal V=\{\pm 1\}^V$.
	We construct a finite subset $\{\mathbb P_v\}_{v\in\mathcal V}\subset\mathcal P_{\text{lower}}(q_-,q_+)$ such that learning the optimal policy reduces to identifying the vertex $v$.
	
	First, we construct distribution-valued potential outcomes. Under the distribution $\mathbb P_v$:
	\begin{itemize}
		\item $X$ is uniformly distributed on $S$.
		\item The behavior policy satisfies, for each $x_i\in S$,
		\[
		f_0(f_2(x_i)\mid x_i)=\underline f,
		\qquad
		f_0(a\mid x_i)=\underline f\quad\text{for }a\notin\{f_1(x_i)\},
		\]
		and
		\[
		f_0(f_1(x_i)\mid x_i)=1-(d-1)\underline f.
		\]
		\item Potential outcomes are distribution-valued and take only the two valid base quantile curves $q_-$ and $q_+$ from Theorem~\ref{thm:lower_bound_main}.
		Let $\Delta(t):=q_+(t)-q_-(t)\ge 0$.
		For each $i$ and each action $a\in\mathcal A$, define a Bernoulli latent variable $Z_{i,a}\in\{0,1\}$, drawn conditionally on $X=x_i$ and independently of $A$, and set
		\[
		\mathcal Y[a]^{-1}(t)\ \big|\ (X=x_i)\ :=\ q_-(t)+Z_{i,a}\,\Delta(t),\qquad t\in[0,1].
		\]
		We choose the success probabilities:
		\[
		Z_{i,f_1(x_i)}\sim \mathrm{Bern}(1/2),\qquad
		Z_{i,f_2(x_i)}\sim \mathrm{Bern}(1/2+v_i\delta_0),
		\qquad
		Z_{i,a}\sim \mathrm{Bern}(0)\ \text{otherwise}.
		\]
		\item The observed outcome satisfies $\mathcal Y=\mathcal Y[A]$ (consistency).
	\end{itemize}
	This construction satisfies Assumptions~\ref{ass:consistency}--\ref{ass:quantile_lipschitz}: unconfoundedness holds because the latent variables are conditionally independent of $A$ given $X$, boundedness and quantile regularity follow from the properties of $q_-$ and $q_+$, and the overlap constant is $\underline f$ by design since $0<\underline f\le 1/d$.
	
	Now we can evaluate $U_\alpha$ and identify the optimal policy.
	Under $\mathbb P_v$, let $\mu_v(\pi)$ denote the policy-induced barycenter quantile (which depends on the distribution $\mathbb P_v$).
	By Proposition~\ref{prop:barycenter_quantile_main}:
	\[
	\mu_v(\pi)^{-1}(t)=\mathbb E_{(X,\mathcal Y)\sim \mathbb P_v}\big[\mathcal Y[\pi(X)]^{-1}(t)\big].
	\]
	For each $x_i$, choosing $f_2(x_i)$ instead of $f_1(x_i)$ changes the mean quantile by $\delta_0\,v_i\,\Delta(t)$.
	Hence the optimal policy $\pi_v^\star$ for $\mathbb P_v$ satisfies
	\[
	\pi_v^\star(x_i)=
	\begin{cases}
		f_2(x_i), & v_i=+1,\\
		f_1(x_i), & v_i=-1,
	\end{cases}
	\]
	and for any $\pi\in\Pi$,
	\[
	U_\alpha(\mu_v(\pi_v^\star))-U_\alpha(\mu_v(\pi))
	\ge
	\frac{\delta_0}{V}\left(\sum_{i=1}^V \mathbf 1\{\pi(x_i)\neq \pi_v^\star(x_i)\}\right)\int_0^\alpha \Delta(t)\,dt.
	\]
	
	Let $\mathbb P_v^{N}$ denote the joint distribution of the dataset $\mathcal D_N$ (containing $N$ logged samples) under $\mathbb P_v$.
	Recall that the estimator $\hat\pi$ is a function of $\mathcal D_N$.
	\[
	\sup_{v\in\mathcal V}\mathbb E_{\mathbb P_v^N}\!\left[U_\alpha(\mu_v(\pi_v^\star))-U_\alpha(\mu_v(\hat\pi))\right]
	\ge
	\frac{1}{2^V}\sum_{v\in\mathcal V}\mathbb E_{\mathbb P_v^N}\!\left[U_\alpha(\mu_v(\pi_v^\star))-U_\alpha(\mu_v(\hat\pi))\right].
	\]
	
	Therefore, based on the above results, we have
	\[
	\frac{1}{2^V}\sum_{v\in\mathcal V}\mathbb E_{\mathbb P_v^N}\!\left[U_\alpha(\mu_v(\pi_v^\star))-U_\alpha(\mu_v(\hat\pi))\right]
	\ge
	\frac{\delta_0}{V \cdot 2^V} \sum_{v\in\mathcal V} \sum_{i=1}^V \mathbb P_v^N\big(\hat\pi(x_i) \neq \pi_v^\star(x_i)\big) \int_0^\alpha \Delta(t)\,dt.
	\]
	We now apply Assouad's symmetrization. For each $i \in \{1,\dots,V\}$, let $M_i[v]$ be the vertex in $\mathcal V$ that differs from $v$ only in the $i$-th coordinate. By swapping the order of summation and pairing each $v$ with $M_i[v]$ for $v_i=1$, we obtain
	\begin{align*}
	\sum_{v\in\mathcal V} \mathbb P_v^N\big(\hat\pi(x_i) \neq \pi_v^\star(x_i)\big) 
	&= \sum_{v: v_i=1} \Big( \mathbb P_v^N\big(\hat\pi(x_i) \neq f_2(x_i)\big) + \mathbb P_{M_i[v]}^N\big(\hat\pi(x_i) \neq f_1(x_i)\big) \Big)\\
    &\geq\sum_{v: v_i=1} \Big( \mathbb P_v^N\big(\hat\pi(x_i) \neq f_2(x_i)\big) + 1 - \mathbb P_{M_i[v]}^N\big(\hat\pi(x_i) \neq f_2(x_i)\big) \Big)\\
    &\geq\sum_{v: v_i=1} \Big(1 - \operatorname{TV}(\mathbb P_{v}^N,\mathbb P_{M_i[v]}^N)\Big),    
	\end{align*}
where \(\operatorname{TV}(\mathbb{P},\mathbb{Q})\) represents the total variation distance between \(\mathbb{P}\) and \(\mathbb{Q}\). Applying this lower bound to the sum yields
\[
\frac{1}{2^V}\sum_{v\in\mathcal V}\mathbb E_{\mathbb P_v^N}\!\left[U_\alpha(\mu_v(\pi_v^\star))-U_\alpha(\mu_v(\hat\pi))\right]
\ge
\frac{\delta_0}{V}\left(\frac{1}{2^V}\sum_{i=1}^V\sum_{\substack{v\in\mathcal V\\ v_i=1}}
\big(1-\mathrm{TV}(\mathbb P_v^{N},\mathbb P_{M_i[v]}^{N})\big)\right)\int_0^\alpha \Delta(t)\,dt.
\]
According to the relationship between TV distance and KL divergence (e.g., Lemma 2.6 in \cite{tsybakov2008nonparametric}), we have: for any distributions $\mathbb{P},\;\mathbb{Q}$,
\[
1-\mathrm{TV}(\mathbb{P},\mathbb{Q})\ \ge\ \frac12 \exp\big(-\mathrm{KL}(\mathbb{P}\|\mathbb{Q})\big).
\]
In our construction, $\mathbb P_v$ and $\mathbb P_{M_i[v]}$ differ only when $(X=x_i,\,A=f_2(x_i))$,
which happens with probability $\frac{1}{V}\underline f$.
Conditioned on this event, the observed outcome reveals the Bernoulli latent variable
$Z_{i,f_2(x_i)}\sim \mathrm{Bern}(1/2+\delta_0)$ versus $\mathrm{Bern}(1/2-\delta_0)$.
A standard bound for Bernoulli KL implies that for $\delta_0\in(0,1/4)$,
\[
\mathrm{KL}(\mathbb P_v^{N}\|\mathbb P_{M_i[v]}^{N})
\le
\frac{12\underline f N}{V}\,\delta_0^2.
\]
Choosing
\[
\delta_0:=\frac{1}{\sqrt{24}}\min\left\{1,\sqrt{\frac{V}{\underline f N}}\right\}
\]
ensures $\delta_0\in(0,1/4)$ and
\[
\frac{12\underline f N}{V}\delta_0^2
\le \frac12.
\]
Hence $1-\mathrm{TV}(\mathbb P_v^{N},\mathbb P_{M_i[v]}^{N})\ge \frac12 e^{-1/2}$.

Plugging this bound into the preceding display yields
\[
\sup_{v\in\mathcal V}\mathbb E_{\mathbb P_v^N}\!\left[U_\alpha(\mu_v(\pi_v^\star))-U_\alpha(\mu_v(\hat\pi))\right]
\ge
\frac{\delta_0}{4}e^{-1/2}\int_0^\alpha \Delta(t)\,dt
=
\frac{e^{-1/2}}{4\sqrt{24}}
\min\left\{1,\sqrt{\frac{V}{\underline f N}}\right\}
\int_0^\alpha (q_+(t)-q_-(t))\,dt.
\]
Since $\{\mathbb P_v\}_{v\in\mathcal V}\subset\mathcal P_{\text{lower}}(q_-,q_+)$, this lower bound holds for $\sup_{\mathbb P\in\mathcal P_{\text{lower}}(q_-,q_+)}$ as well.
This proves the theorem with $c_0=e^{-1/2}/(4\sqrt{24})$.
\end{proof}

\end{document}